\newcommand{\doi}[1]{\textsc{doi}: \href{http://dx.doi.org/#1}{\nolinkurl{#1}}}
\begin{document}


\title{
    DPMC: Weighted Model Counting by Dynamic Programming on Project-Join Trees%
    \thanks{Work supported in part by NSF grants IIS-1527668, CCF-1704883, IIS-1830549, and DMS-1547433.}
}
\titlerunning{ 
    Weighted Model Counting by Dynamic Programming on Project-Join Trees
}

\author{
    Jeffrey M. Dudek
    \and Vu H. N. Phan
    \and Moshe Y. Vardi
}
\authorrunning{J. M. Dudek et al.}

\institute{
    Rice University, Houston TX 77005, USA \\
    \email{\{jmd11,vhp1,vardi\}@rice.edu}
}

\maketitle



\begin{abstract}
    We propose a unifying dynamic-programming framework to compute exact literal-weighted model counts of formulas in conjunctive normal form.
    At the center of our framework are project-join trees, which specify efficient project-join orders to apply additive projections (variable eliminations) and joins (clause multiplications).
    In this framework, model counting is performed in two phases.
    First, the planning phase constructs a project-join tree from a formula.
    Second, the execution phase computes the model count of the formula, employing dynamic programming as guided by the project-join tree.
    We empirically evaluate various methods for the planning phase and compare constraint-satisfaction heuristics with tree-decomposition tools.
    We also investigate the performance of different data structures for the execution phase and compare algebraic decision diagrams with tensors.
    We show that our dynamic-programming model-counting framework \Dpmc{} is competitive with the state-of-the-art exact weighted model counters \cachet, \ctd, \df, and \minictd.
    \keywords{
        treewidth
        \and factored representation
        \and early projection
    }
\end{abstract}


\section{Introduction}
\label{sec_intro}
\textdef{Model counting} is a fundamental problem in artificial intelligence, with applications in machine learning, probabilistic reasoning, and verification \cite{domshlak2007probabilistic,gomes2009model,naveh2007constraint}.
Given an input set of constraints, with the focus in this paper on Boolean constraints, the model-counting problem is to count the number of satisfying assignments.
Although this problem is \#P-Complete \cite{valiant1979complexity}, a variety of tools exist that can handle industrial sets of constraints, \eg, \cite{sang2004combining,oztok2015top,darwiche2004new,lagniez2017improved}.

Dynamic programming is a powerful technique that has been applied across computer science \cite{bellman1966dynamic}, including to model counting \cite{bacchus2009solving,samer2010algorithms,jegou2016improving}.
The key idea is to solve a large problem by solving a sequence of smaller subproblems and then incrementally combining these solutions into the final result.
Dynamic programming provides a natural framework to solve a variety of problems defined on sets of constraints: subproblems can be formed by partitioning the constraints.
This framework has been instantiated into algorithms for database-query optimization \cite{mcmahan2004projection}, satisfiability solving \cite{uribe1994ordered,aguirre2001random,pan2005symbolic}, and QBF evaluation \cite{charwat2016bdd}.

Dynamic programming has also been the basis of several tools for model counting \cite{dudek2020addmc,dudek2019efficient,dudek2020parallel,fichte2020exploiting}.
Although each tool uses a different data structure--algebraic decision diagrams (ADDs) \cite{dudek2020addmc}, tensors \cite{dudek2019efficient,dudek2020parallel}, or database tables \cite{fichte2020exploiting}--the overall algorithms have similar structure.
The goal of this work is to unify these approaches into a single conceptual framework: \emph{project-join trees}.
Project-join trees are not an entirely new idea.
Similar concepts have been used in constraint programming (as join trees \cite{dechter1989tree}), probabilistic inference (as cluster trees \cite{shachter1994global}), and database-query optimization (as join-expression trees \cite{mcmahan2004projection}).
Our original contributions include the unification of these concepts into project-join trees and the application of this unifying framework to model counting.

We argue that project-join trees provide the natural formalism to describe execution plans for dynamic-programming algorithms for model counting.
In particular, considering project-join trees as \emph{execution plans} enables us to decompose dynamic-programming algorithms such as the one in \cite{dudek2020addmc} into two phases, following the breakdown in \cite{dudek2020parallel}: a \emph{planning} phase and an \emph{execution} phase.
This enables us to study and compare different planning algorithms, different execution environments, and the interplay between planning and execution.
Such a study is the main focus of this work.
While the focus here is on model counting, our framework is of broader interest.
For example, in \cite{tabajara2017factored}, Tabajara and Vardi described a dynamic-programming, binary-decision-diagram-based framework for functional Boolean synthesis.
Refactoring the algorithm into a planning phase followed by an execution phase is also of interest in that context.

The primary contribution of the work here is a dynamic-programming framework for weighted model counting based on project-join trees.
In particular:
\begin{enumerate}
    \item We show that several recent algorithms for weighted model counting \cite{dudek2020addmc,dudek2019efficient,fichte2020exploiting} can be unified into a single framework using project-join trees.
    \item We compare the one-shot%
    \footnote{A \emph{one-shot} algorithm outputs exactly one solution and then terminates immediately.}
    constraint-satisfaction heuristics used in \cite{dudek2020addmc} with the anytime%
    \footnote{An \emph{anytime} algorithm outputs better and better solutions the longer it runs.} tree-decomposition tools used in \cite{dudek2019efficient} and observe that
    tree-decomposition tools outperform
    constraint-satisfaction heuristics.
    \item We compare (sparse) ADDs \cite{bahar1997algebraic} with (dense) tensors \cite{kjolstad2017tensor} and find that ADDs outperform tensors on single CPU cores.
    \item We find that project-join-tree-based algorithms contribute to a portfolio of model counters containing \cachet{} \cite{sang2004combining}, \ctd{} \cite{darwiche2004new}, \df{} \cite{lagniez2017improved}, and \minictd{} \cite{oztok2015top}.
\end{enumerate}
These conclusions have significance beyond model counting.
The superiority of anytime tree-decomposition tools over classical one-shot constraint-satisfaction heuristics can have broad applicability.
Similarly, the advantage of compact data structures for dynamic programming may apply to other optimization problems.


\section{Preliminaries}
\label{sec_prelim}


\paragraph{\textbf{Pseudo-Boolean Functions and Early Projection}}

A \emph{pseudo-Boolean function} over a set $X$ of variables is a function $f: 2^X \to \R$.
Operations on pseudo-Boolean functions include \emph{product} and \emph{projection}.
First, we define product.
\begin{definition}[Product]
\label{def_mult}
    Let $X$ and $Y$ be sets of Boolean variables.
    The \textdef{product} of functions $f: 2^X \to \R$ and $g: 2^Y \to \R$ is the function $f \mult g: 2^{X \cup Y} \to \R$ defined for all $\tau \in 2^{X \cup Y}$ by
    $(f \mult g)(\tau) \equiv f(\tau \cap X) \mult g(\tau \cap Y).$
\end{definition}

Next, we define (additive) projection, which marginalizes a single variable.
\begin{definition}[Projection]
\label{def_proj}
    Let $X$ be a set of Boolean variables and $x \in X$.
    The \textdef{projection} of a function $f: 2^X \to \R$ \wrt{} $x$ is the function $\proj_x f: 2^{X \setminus \set{x}} \to \R$ defined for all $\tau \in 2^{X \setminus \set{x}}$ by
    $\pars{\proj_x f}(\tau) \equiv f(\tau) + f(\tau \cup \set{x}).$
\end{definition}
Note that projection is commutative, \ie, that $\proj_x \proj_y f = \proj_y \proj_x f$ for all variables $x, y \in X$ and functions $f: 2^X \to \R$.
Given a set $X = \set{x_1, x_2, \ldots, x_n}$, define
$\proj_X f \equiv \proj_{x_1} \proj_{x_2} \ldots \proj_{x_n} f$.
Our convention is that $\proj_\emptyset f \equiv f$.

When performing a product followed by a projection, it is sometimes possible to perform the projection first.
This is known as \emph{early projection} \cite{mcmahan2004projection}.
\begin{theorem}[Early Projection]
\label{thm_early_proj}
    Let $X$ and $Y$ be sets of variables.
    For all functions $f: 2^X \to \R$ and $g: 2^Y \to \R$, if $x \in X \setminus Y$, then $\proj_x (f \mult g) = \pars{\proj_x f} \mult g.$
\end{theorem}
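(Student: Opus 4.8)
The plan is to prove the identity by a direct unfolding of Definitions~\ref{def_mult} and~\ref{def_proj}, after first checking that the two sides are functions on the same domain. On the left, $f \mult g$ has domain $2^{X \cup Y}$, and since $x \in X \subseteq X \cup Y$, the function $\proj_x(f \mult g)$ has domain $2^{(X \cup Y) \setminus \set{x}}$. On the right, $\proj_x f$ has domain $2^{X \setminus \set{x}}$, so $\pars{\proj_x f} \mult g$ has domain $2^{(X \setminus \set{x}) \cup Y}$; because $x \notin Y$, we have $(X \setminus \set{x}) \cup Y = (X \cup Y) \setminus \set{x}$, so the domains agree. I would state this domain bookkeeping explicitly, since it is the place where the hypothesis $x \in X \setminus Y$ is actually used.

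Next I would fix an arbitrary $\tau \in 2^{(X \cup Y) \setminus \set{x}}$ and evaluate both sides at $\tau$. For the left side, applying Definition~\ref{def_proj} gives $\pars{\proj_x(f \mult g)}(\tau) = (f \mult g)(\tau) + (f \mult g)(\tau \cup \set{x})$, and then Definition~\ref{def_mult} expands each term into a product of an $f$-value and a $g$-value. The key simplifications are the set-theoretic facts that, since $x \notin \tau$ and $x \notin Y$, we have $\tau \cap Y = (\tau \cup \set{x}) \cap Y$, and, since $x \in X$, we have $(\tau \cup \set{x}) \cap X = (\tau \cap X) \cup \set{x}$. Substituting these in and factoring out the common factor $g(\tau \cap Y)$ yields $\pars{f(\tau \cap X) + f((\tau \cap X) \cup \set{x})} \mult g(\tau \cap Y)$.

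For the right side, Definition~\ref{def_mult} gives $\pars{\pars{\proj_x f} \mult g}(\tau) = \pars{\proj_x f}(\tau \cap (X \setminus \set{x})) \mult g(\tau \cap Y)$, and since $x \notin \tau$ we have $\tau \cap (X \setminus \set{x}) = \tau \cap X$; expanding $\proj_x f$ by Definition~\ref{def_proj} then produces exactly $\pars{f(\tau \cap X) + f((\tau \cap X) \cup \set{x})} \mult g(\tau \cap Y)$, matching the left side. Since $\tau$ was arbitrary, the two functions are equal. There is no real obstacle here beyond being careful with the domain identification and the handful of intersection/union identities; the argument is entirely elementary once the hypothesis $x \in X \setminus Y$ is used to (i) make the domains coincide and (ii) guarantee that projecting out $x$ does not disturb the argument fed to $g$.
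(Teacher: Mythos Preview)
Your proposal is correct and follows essentially the same approach as the paper: a direct unfolding of Definitions~\ref{def_mult} and~\ref{def_proj} at an arbitrary $\tau \in 2^{(X \cup Y)\setminus\{x\}}$, using $x \notin Y$ to keep the $g$-argument unchanged and $x \in X$ to rewrite $(\tau \cup \{x\}) \cap X$. The paper presents it as a single chain of equalities from left to right rather than computing both sides separately, and it omits the explicit domain check you include, but the substance is identical.
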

Early projection is a key technique in symbolic computation in a variety of settings, including database-query optimization \cite{kolaitis2000conjunctive}, symbolic model checking \cite{burch1991symbolic}, satisfiability solving \cite{pan2005symbolic}, and model counting \cite{dudek2020addmc}.


\paragraph{\textbf{Weighted Model Counting}}

We compute the total weight, subject to a given weight function, of all models of an input propositional formula.
Formally:
\begin{definition}[Weighted Model Count]
    Let $X$ be a set of Boolean variables, $\phi: 2^X \to \{0,1\}$ be a Boolean function, and $W: 2^X \to \R$ be a pseudo-Boolean function.
    The \textdef{weighted model count} of $\phi$ \wrt{} $W$ is
    $W(\phi) \equiv \sum_{\tau \in 2^X} \phi(\tau) \mult W(\tau)$.
\end{definition}

The weighted model count of $\phi$ \wrt{} $W$ can be naturally expressed in terms of pseudo-Boolean functions: $W(\phi) = \left(\sum_X (\phi \mult W) \right)(\emptyset)$.
The function $W: 2^X \to \R$ is called a \textdef{weight function}.
In this work, we focus on {literal-weight functions}, which can be expressed as products of weights associated with each variable.
Formally, a \textdef{literal-weight function} $W$ can be factored as $W = \prod_{x \in X} W_x$ for pseudo-Boolean functions $W_x: 2^{\{x\}} \to \R$.


\paragraph{\textbf{Graphs}}

A \emph{graph} $G$ has a
set $\V{G}$ of vertices, a set $\E{G}$ of (undirected) edges, a function $\delta_G: \V{G} \to 2^{\E{G}}$ that gives the set of edges incident to each vertex, and a function $\epsilon_G: \E{G} \to 2^{\V{G}}$ that gives the set of vertices incident to each edge.
Each edge must be incident to exactly two vertices, but multiple edges can exist between two vertices.
A \emph{tree} is a simple, connected, and acyclic graph.
A \emph{leaf} of a tree $T$ is a vertex of degree one, and we use $\Lv{T}$ to denote the set of leaves of $T$.
We often refer to a vertex of a tree as a \emph{node} and an edge as an \emph{arc} to avoid confusion.
A \emph{rooted tree} is a tree $T$ together with a distinguished node $r \in \V{T}$ called the \emph{root}.
In a rooted tree $(T, r)$, each node $n \in \V{T}$ has a (possibly empty) set of \emph{children}, denoted $\C(n)$, which contains all nodes $n'$ adjacent to $n$ \st{} all paths from $n'$ to $r$ contain $n$.

\section{Using Project-Join Trees for Weighted Model Counting}
\label{sec_jointree}

In model counting, a Boolean formula is often given in conjunctive normal form (CNF), \ie, as a set $\phi$ of clauses.
For each clause $c \in \phi$, define $\vars(c)$ to be the set of variables appearing in $c$.
Then $c$ represents a Boolean function over $\vars(c)$. Similarly, $\phi$ represents a Boolean function over $\vars(\phi) \equiv \bigcup_{c \in \phi} \vars(c)$.

It is well-known that weighted model counting can be performed through a sequence of projections and joins on pseudo-Boolean functions \cite{dudek2020addmc,dudek2019efficient}.
Given a CNF formula $\phi$ and a literal-weight function $W$ over a set $X$ of variables, the corresponding weighted model count can be computed as follows:
\begin{equation}
\label{eq_factored_wmc}
    W(\phi) = \pars{
        \proj_X
        \pars{\prod_{c \in \phi} c \mult \prod_{x \in X} W_x}
    }(\emptyset)
\end{equation}

By taking advantage of the associative and commutative properties of multiplication as well as the commutative property of projection, we can rearrange Equation \eqref{eq_factored_wmc} to apply early projection.
It was shown in \cite{dudek2020addmc} that early projection can significantly reduce computational cost.
There are a variety of possible rearrangements of Equation \eqref{eq_factored_wmc} of varying costs.
Although \cite{dudek2020addmc} considered several heuristics for performing this rearrangement (using bucket elimination \cite{dechter1999bucket} and Bouquet's Method \cite{bouquet1999gestion}), they did not attempt to analyze rearrangements.

In this work, we aim to analyze the quality of the rearrangement, in isolation from the underlying implementation and data structure used for Equation \eqref{eq_factored_wmc}.
This approach has been highly successful for database-query optimization \cite{mcmahan2004projection}, where the central object of theoretical reasoning is the \emph{query plan}.
The approach has also seen similar success in Bayesian network inference \cite{darwiche1998dynamic}.

We model a rearrangement of Equation \eqref{eq_factored_wmc} as a \emph{project-join tree}:
\begin{definition}[Project-Join Tree]
\label{def_jointree}
    Let $X$ be a set of Boolean variables and $\phi$ be a CNF formula over $X$.
    A \emph{project-join tree} of $\phi$ is a tuple $(T, r, \gamma, \pi)$ where:
    \begin{itemize}
        \item $T$ is a tree with root $r \in \V{T}$,
        \item $\gamma: \Lv{T} \to \phi$ is a bijection between the leaves of $T$ and the clauses of $\phi$, and
        \item $\pi: \V{T} \setminus \Lv{T} \to 2^X$ is a labeling function on internal nodes.
    \end{itemize}
    Moreover, $(T, r, \gamma, \pi)$ must satisfy the following two properties:
    \begin{enumerate}[ref=\arabic*]
        \item $\{\pi(n) : n \in \V{T} \setminus \Lv{T} \}$ is a partition of $X$, and \label{prop1}
        \item for each internal node $n \in \V{T} \setminus \Lv{T}$, variable $x \in \pi(n)$, and clause $c \in \phi$ \st{} $x$ appears in $c$, the leaf node $\gamma^{-1}(c)$ must be a descendant of $n$ in $T$. \label{prop2}
    \end{enumerate}
\end{definition}
If $n$ is a leaf node, then $n$ corresponds to a clause $c = \gamma(n)$ in Equation \eqref{eq_factored_wmc}.
If $n$ is an internal node, then $n$'s children $\C(n)$ are to be multiplied before the projections of variables in $\pi(n)$ are performed.
The two properties ensure that the resulting expression is equivalent to Equation \eqref{eq_factored_wmc} using early projection.
See Figure \ref{fig_join_tree} for a graphical example of a project-join tree.
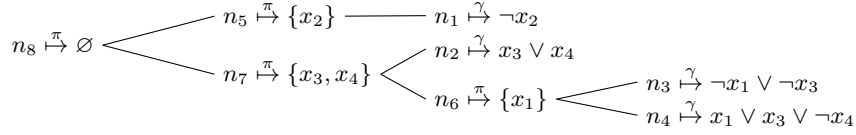
\begin{figure}
    \centering
    \begin{tikzpicture}[grow=right] 
        \tikzset{level distance=80pt,sibling distance=-6pt}
        \tikzset{execute at begin node=\strut}
        \tikzset{every tree node/.style={anchor=base west}}
        \Tree [ .$n_{8}\piMap\emptyset$
            [ .$n_{7}\piMap\set{x_3, x_4}$
                [ .$n_{6}\piMap\set{x_1}$
                    [ .$n_4\gammaMap{x_1 \vee x_3 \vee \neg x_4}$ ]
                    [ .$n_3\gammaMap{\neg x_1 \vee \neg x_3}$ ]
                ]
                [ .$n_2\gammaMap{x_3 \vee x_4}$ ]
            ]
            [ .$n_{5}\piMap\set{x_2}$ [ .$n_1\gammaMap{\neg x_2}$ ] ]
        ]
    \end{tikzpicture}
\caption{
    A project-join tree $(T, n_{8}, \gamma, \pi)$ of a CNF formula $\phi$.
    Each leaf node is labeled by $\gamma$ with a clause of $\phi$.
    Each internal node is labeled by $\pi$ with a set of variables of $\phi$.
}
\label{fig_join_tree}
\end{figure}

Project-join trees have previously been studied in the context of database-query optimization \cite{mcmahan2004projection}.
Project-join trees are closely related to contraction trees in the context of tensor networks \cite{evenbly2014improving,dudek2019efficient}.
Once a rearrangement of Equation \eqref{eq_factored_wmc} has been represented by a project-join tree, we can model the computation process according to the rearrangement.
In particular, given a literal-weight function $W = \prod_{x \in X} W_x$, we define the $W$-\emph{valuation} of each node $n \in \V{T}$ as a pseudo-Boolean function associated with $n$.
The $W$-valuation of a node $n \in \V{T}$ is denoted $f^W_n$ and defined as follows:
\begin{equation}
\label{eq_valuation}
    f^W_n \equiv
    \begin{cases}
       \gamma(n) & \text{if}~n \in \Lv{T} \\
        \sum_{\pi(n)} \pars{ \prod_{o \in \C(n)} f^W_o \cdot \prod_{x \in \pi(n)} W_x } & \text{if}~n \notin \Lv{T}
    \end{cases}
\end{equation}

Note that the $W$-valuation of a leaf node $n \in \Lv{T}$ is a clause $c = \gamma(n) \in \phi$, interpreted in this context as an associated function $\lambda_c : 2^{\vars(c)} \to \B$ where $\lambda_c(\tau) = 1$ if and only if the truth assignment $\tau$ satisfies $c$.
The main idea is that the $W$-valuation at each node of $T$ is a pseudo-Boolean function computed as a subexpression of Equation \eqref{eq_factored_wmc}.
The $W$-valuation of the root is exactly the result of Equation \eqref{eq_factored_wmc}, \ie, the weighted model count of $\phi$ \wrt{} $W$:
\begin{theorem}
\label{thm_valuation_wmc}
    Let $\phi$ be a CNF formula over a set $X$ of variables, $(T, r, \gamma, \pi)$ be a project-join tree of $\phi$, and $W$ be a literal-weight function over $X$.
    Then $f^W_r(\emptyset) = W(\phi)$.
\end{theorem}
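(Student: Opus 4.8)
The plan is to prove the theorem by structural induction on $T$, after strengthening it to a statement about the $W$-valuation at \emph{every} node. For a node $n \in \V{T}$, let $\phi_n \subseteq \phi$ collect the clauses $\gamma(\ell)$ over all leaves $\ell$ in the subtree rooted at $n$, and let $X_n \subseteq X$ be the union of $\pi(m)$ over all internal nodes $m$ in that subtree. I will show, for every $n$, that
\[
    f^W_n = \proj_{X_n}\pars{\prod_{c \in \phi_n} \lambda_c \cdot \prod_{x \in X_n} W_x},
\]
and that this is a pseudo-Boolean function over $\vars(\phi_n) \setminus X_n$. Taking $n = r$ then finishes the proof: by Property~\ref{prop1} the sets $\pi(m)$ partition $X$, so $X_r = X$; every leaf lies in the subtree of $r$, so $\phi_r = \phi$; hence $f^W_r = \proj_X\pars{\prod_{c \in \phi} \lambda_c \cdot \prod_{x \in X} W_x}$, a function over $\vars(\phi) \setminus X = \emptyset$, and evaluating it at $\emptyset$ is exactly the right-hand side of Equation~\eqref{eq_factored_wmc}, namely $W(\phi)$.

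For the base case, a leaf $n$ has $f^W_n = \gamma(n) = \lambda_c$ with $c = \gamma(n)$, while $X_n = \emptyset$ and $\phi_n = \{c\}$, so the identity holds (empty projection, empty weight product). For the inductive step, let $n$ be internal with children $o_1, \dots, o_k$. Since $n$ is not a leaf, the subtrees rooted at the $o_i$ are pairwise disjoint and together contain all leaves and all internal nodes of the subtree at $n$ except $n$ itself; combined with Property~\ref{prop1} this gives the disjoint decompositions $\phi_n = \bigsqcup_i \phi_{o_i}$ and $X_n = \pi(n) \sqcup \bigsqcup_i X_{o_i}$. Writing $g_i \equiv \prod_{c \in \phi_{o_i}} \lambda_c \cdot \prod_{x \in X_{o_i}} W_x$ and substituting the induction hypothesis $f^W_{o_i} = \proj_{X_{o_i}} g_i$ into Equation~\eqref{eq_valuation} yields
\[
    f^W_n = \proj_{\pi(n)}\pars{\prod_{i=1}^{k} \pars{\proj_{X_{o_i}} g_i} \cdot \prod_{x \in \pi(n)} W_x}.
\]

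The crux is to pull each $\proj_{X_{o_i}}$ outside the product by repeated use of Early Projection (Theorem~\ref{thm_early_proj}); this is legitimate provided no variable of $X_{o_i}$ occurs in $g_j$ for $j \neq i$ nor in any $W_x$ with $x \in \pi(n)$. The latter holds because $X_{o_i}$ is disjoint from $\pi(n)$. For the former, suppose some $x \in X_{o_i}$ occurs in $g_j$. It cannot lie in $X_{o_j}$ (disjointness), so it must appear in some clause $c \in \phi_{o_j}$; but $x \in \pi(m)$ for some internal node $m$ in the subtree of $o_i$, and then Property~\ref{prop2} forces $\gamma^{-1}(c)$ to be a descendant of $m$, hence to lie in the subtree of $o_i$ --- contradicting $c \in \phi_{o_j}$ and the disjointness of the two subtrees. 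With these hypotheses checked, Early Projection rewrites the bracketed expression as $\proj_{\bigcup_i X_{o_i}}\pars{\prod_i g_i \cdot \prod_{x \in \pi(n)} W_x}$; then commutativity of projection merges $\proj_{\pi(n)}$ with this into $\proj_{X_n}$, and associativity and commutativity of the product together with $\phi_n = \bigsqcup_i \phi_{o_i}$ and $X_n = \pi(n) \sqcup \bigsqcup_i X_{o_i}$ rearrange $\prod_i g_i \cdot \prod_{x \in \pi(n)} W_x$ into $\prod_{c \in \phi_n} \lambda_c \cdot \prod_{x \in X_n} W_x$, completing the induction. I expect the main obstacle to be exactly this disjointness bookkeeping: Early Projection is sound only when the eliminated variables are absent from the factors left behind, and verifying that absence is precisely where the descendant condition (Property~\ref{prop2}) of project-join trees is needed.
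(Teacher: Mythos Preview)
Your proposal is correct and follows essentially the same approach as the paper: both strengthen the claim to an identity $f^W_n = \proj_{P(n)}\bigl(\Phi(n)\cdot W_{P(n)}\bigr)$ at every node (your $X_n,\phi_n$ are the paper's $P(n),\Phi(n)$), prove it by structural induction, and use Property~\ref{prop2} to justify the disjointness needed for Early Projection in the inductive step. The only cosmetic difference is that the paper isolates the disjointness check and the ``pulling projections out'' step as two standalone lemmas, whereas you inline them directly into the induction.
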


This gives us a two-phase algorithm for computing the weighted model count of a formula $\phi$.
First, in the \emph{planning} phase, we construct a project-join tree $(T, r, \gamma, \pi)$ of $\phi$.
We discuss algorithms for constructing project-join trees in Section \ref{sec_planning}.
Second, in the \emph{execution} phase, we compute $f^W_r$ by following Equation \eqref{eq_valuation}.
We discuss data structures for computing Equation \eqref{eq_valuation} in Section \ref{sec_execution}.

When computing a $W$-valuation, the number of variables that appear in each intermediate pseudo-Boolean function has a significant impact on the runtime.
The set of variables that appear in the $W$-valuation of a node is actually independent of $W$.
In particular, for each node $n \in \V{T}$, define $\vars(n)$ as follows:
\begin{equation}
    \vars(n) \equiv
    \begin{cases}
        \vars(\gamma(n)) & \text{if}~n \in \Lv{T} \\
        \pars{\bigcup_{o \in \C(n)} \vars(o)} \setminus \pi(n) & \text{if}~n \notin \Lv{T}
    \end{cases}
\end{equation}

For every literal-weight function $W$, the domain of the function $f^W_n$ is $2^{\vars(n)}$.
To characterize the difficulty of $W$-valuation, we define the \emph{size} of a node $n$, $\func{size}(n)$, to be $\size{\vars(n)}$ for leaf nodes and $\size{\vars(n) \cup \pi(n)}$ for internal nodes.
The \emph{width} of a project-join tree $(T, r, \gamma, \pi)$ is $\func{width}(T) \equiv \max_{n \in \V T} \func{size}(n)$.
We see in Section \ref{sec_experiments} how the width impacts the computation of $W$-valuations.

\section{Planning Phase: Building a Project-Join Tree}
\label{sec_planning}

In the planning phase, we are given a CNF formula $\phi$ over Boolean variables $X$.
The goal is to construct a project-join tree of $\phi$.
In this section, we present two classes of techniques that have been applied to model counting:
using constraint-satisfaction heuristics (in \cite{dudek2020addmc}) and using tree decompositions (in \cite{dudek2019efficient,fichte2020exploiting}).


\subsection{Planning with One-Shot Constraint-Satisfaction Heuristics}
\label{sec_csp}

A variety of constraint-satisfaction heuristics for model counting were presented in a single algorithmic framework by \cite{dudek2020addmc}.
These heuristics have a long history in constraint programming \cite{dechter2003constraint}, database-query optimization \cite{mcmahan2004projection}, and propositional reasoning \cite{pan2005symbolic}.
In this section, we adapt the framework of \cite{dudek2020addmc} to produce project-join trees.
This algorithm is presented as Algorithm \ref{alg_csp_jt}, which constructs a project-join tree of a CNF formula using constraint-satisfaction heuristics.
The functions $\clusterVarOrder$, $\clauseRank$, and $\chosenCluster$ represent heuristics for fine-tuning the specifics of the algorithm.
Before discussing the various heuristics, we assert the correctness of Algorithm \ref{alg_csp_jt} in the following theorem.
\begin{theorem}
\label{thm_csp_jt}
    Let $X$ be a set of variables and $\phi$ be a CNF formula over $X$.
    Assume that $\clusterVarOrder$ returns an injection $X \to \N$.
    Furthermore, assume that all $\clauseRank$ and $\chosenCluster$ calls satisfy the following conditions:
    \begin{enumerate}[ref=\arabic*]
        \item $1 \le \clauseRank(c, \rho) \le m$, \label{cond1}
        \item $i < \chosenCluster(n_i) \le m$, and \label{cond2}
        \item $X_s \cap \vars(n_i) = \emptyset$ for all integers $s$ where $i < s < \chosenCluster(n_i)$. \label{cond3}
    \end{enumerate}
    Then Algorithm \ref{alg_csp_jt} returns a project-join tree of $\phi$.
\end{theorem}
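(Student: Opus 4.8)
The plan is to verify directly that the tuple $(T,r,\gamma,\pi)$ returned by Algorithm \ref{alg_csp_jt} meets every requirement of Definition \ref{def_jointree}: that $T$ is a tree, that $r$ is its root, that $\gamma$ is a bijection from $\Lv{T}$ onto $\phi$, that $\pi$ is defined on exactly the internal nodes, and that the partition property \ref{prop1} and the descendant property \ref{prop2} hold. I would run the whole argument off a loop invariant for the main loop of the algorithm: after the first $i$ clusters have been processed, (a) the nodes created so far together with the leaves and nodes not yet consumed form a forest; (b) every node already created has been placed into exactly one cluster whose index strictly exceeds the index of the cluster that produced it; and (c) the labels $\pi(\cdot)$ assigned so far are pairwise disjoint subsets of $X$ whose union is precisely the set of variables eliminated in the first $i$ iterations.

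First I would dispatch the skeleton. The map $\gamma$ is fixed once and for all at initialization by giving each clause of $\phi$ its own fresh leaf, so it is a bijection $\Lv{T}\to\phi$. For the tree structure, condition \ref{cond2} does the work twice over: since each non-root node is inserted by $\chosenCluster$ into a cluster of strictly larger index, the ``child of'' relation strictly decreases the cluster index, so $T$ has no cycle and the relation is well founded; and since indices stay in $\set{1,\dots,m}$, the process halts, leaving a single node (the one created at the last nonempty cluster, equivalently the unique node never given a parent) which is returned as $r$. Connectivity holds because from any node the chain of strictly increasing cluster indices terminates at $r$. Hence $T$ is a tree rooted at $r$, the internal nodes are exactly the created $n_i$, and $\pi$ is total on them; condition \ref{cond1} is what guarantees each leaf starts life in a genuine cluster so that this construction even gets off the ground.

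Next, property \ref{prop1}. Pairwise disjointness of the $\pi$-labels is part (c) of the invariant, which in turn rests on $\clusterVarOrder$ returning an injection (so the variable pools attached to distinct clusters are disjoint) and on each variable being retired the first time it is eliminated. For the covering half I would show that every $x\in X$ is eliminated at some node: following the functions that mention $x$ through successive clusters, condition \ref{cond3} forbids a move that jumps strictly past the cluster designated to eliminate $x$, so $x$ cannot escape elimination, and in the worst case $x$ reaches the root, whose label necessarily contains every not-yet-eliminated variable since all leaves are its descendants.

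The heart of the proof is property \ref{prop2}: for an internal node $n_i$, a variable $x\in\pi(n_i)$, and a clause $c$ containing $x$, I must show the leaf $\ell=\gamma^{-1}(c)$ is a descendant of $n_i$. The plan is to trace $\ell$ upward along the unique path to $r$: $\ell$ originates in cluster $j_0=\clauseRank(c,\rho)$, is absorbed into the node $n_{j_0}$ made there, which is moved to cluster $j_1=\chosenCluster(n_{j_0})$, and so on, yielding an increasing chain $j_0<j_1<\cdots$ of clusters whose nodes lie on the path from $\ell$ to $r$. Since $x\in\vars(c)$, the recursion defining $\vars(\cdot)$ shows $x\in\vars$ of every node on this chain up until $x$ is eliminated; condition \ref{cond3} then rules out any link of the chain that steps strictly over the cluster eliminating $x$, so the chain must visit that cluster, which by the way $\pi$ assigns eliminations is cluster $i$. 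Thus $n_i$ sits on the path from $\ell$ to $r$, i.e.\ $\ell$ is a descendant of $n_i$. I expect this last step to be the main obstacle: one has to nail down, by a simultaneous induction over the main loop and the $\vars(\cdot)$ recursion, both that $x$ really does persist in the variable set of every node on the upward chain until its elimination and that the cluster at which it is eliminated is precisely the node carrying $x$ in its $\pi$-label; everything else is bookkeeping once the loop invariant above is in hand.
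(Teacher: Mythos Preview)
Your approach is essentially the paper's: the heart of both is the upward chain-tracing argument for Property~\ref{prop2}, using Condition~\ref{cond3} to prevent the chain of cluster indices from skipping the cluster where $x$ is eliminated. The paper factors this into two small lemmas---one showing that the starting cluster $j_0 = \clauseRank(c,\rho)$ satisfies $j_0 \le i$ (since $x \in X_i$ means $x \notin \vars(\Gamma_j)$ for all $j>i$), and one doing the actual trace---whereas you do it in one pass. You should make the $j_0 \le i$ step explicit, since without it the chain could in principle start above $i$ and your ``cannot step over'' argument would not bite.

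Where you diverge is Property~\ref{prop1}, and here you are working much harder than necessary. The sets $X_i = \vars(\Gamma_i) \setminus \bigcup_{j>i} \vars(\Gamma_j)$ are a partition of $X$ directly from their set-difference construction (the algorithm even annotates this), and since $\pi(n_i) = X_i$ by Line~\ref{line_internal_node}, Property~\ref{prop1} is immediate. Your attribution of disjointness to $\rho$ being an injection is a misstep: the $X_i$ are not defined as $\rho^{-1}(i)$, so injectivity of $\rho$ plays no role there; and you do not need Condition~\ref{cond3} or any chain-tracing for the covering half either. This does not break your proof, but it obscures what is actually a one-line observation.
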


\begin{algorithm*}[t]
\label{alg_csp_jt}
\caption{Using combined constraint-satisfaction heuristics to build a project-join tree}
    \DontPrintSemicolon
    \KwIn{$X$: set of $m \ge 1$ Boolean variables}
    \KwIn{$\phi$: CNF formula over $X$}
    \KwOut{$(T, r, \gamma, \pi)$: project-join tree of $\phi$}
    $(T, \nil, \gamma, \pi) \gets \text{empty project-join tree}$\;
    $\rho \gets \clusterVarOrder(\phi)$
        \tcc*{injection $\rho : X \to \N$}
    \For{$i = m, m - 1, \ldots, 1$}{
        $\Gamma_i \gets \set{c \in \phi : \clauseRank(c, \rho) = i}$
            \tcc*{$1 \le \clauseRank(c, \rho) \le m$}
        $\kappa_i \gets \set{\leaf(T, c) : c \in \Gamma_i}$\;
            \tcc*{for each $c$, a leaf $l$ with $\gamma(l) = c$ is constructed and put in cluster $\kappa_i$}
        $X_i \gets \vars(\Gamma_i) \setminus \bigcup_{j = i + 1}^m \vars(\Gamma_j)$
            \tcc*{$\set{X_i}_{i = 1}^m$ is a partition of $X$}
    }
    \For{$i = 1, 2, \ldots, m$}{
        \If{$\kappa_i \ne \emptyset$}{
            $n_i \gets \internal(T, \kappa_i, X_i)$
                \tcc*{$\C(n_i) = \kappa_i$ and $\pi(n_i) = X_i$}
                \label{line_internal_node}
            \If{$i < m$}{
                $j \gets \chosenCluster(n_i)$ \label{line_chosen_cluster}
                    \tcc*{$i < j \le m$}
                $\kappa_j \gets \kappa_j \cup \set{n_i}$ \label{line_cluster_union}
            }
        }
    }
    \Return{$(T, n_m, \gamma, \pi)$}
\end{algorithm*}

By Condition \ref{cond1}, we know that $\set{\Gamma_i}_{i = 1}^m$ is a partition of the clauses of $\phi$.
Condition \ref{cond2} ensures that Lines \ref{line_chosen_cluster}-\ref{line_cluster_union} place a new internal node $n_i$ in a cluster that has not yet been processed.
Also on Lines \ref{line_chosen_cluster}-\ref{line_cluster_union}, Condition \ref{cond3} prevents the node $n_i$ from skipping a cluster $\kappa_s$ if there exists some $x \in X_s \cap \vars(n_i)$, since $x$ is projected in iteration $s$, \ie, $x$ is added to $\pi(n_s)$.
These invariants are sufficient to prove that Algorithm \ref{alg_csp_jt} indeed returns a project-join tree of $\phi$.
All heuristics we use in this work satisfy the conditions of Theorem \ref{thm_csp_jt}.

There are a variety of heuristics to fine-tune Algorithm \ref{alg_csp_jt}.
For the function $\clusterVarOrder$, we consider the heuristics \Random, \Mcs{} (\textdef{maximum-cardinality search} \cite{tarjan1984simple}), \Lexp/\Lexm{} (\textdef{lexicographic search for perfect/minimal orders} \cite{koster2001treewidth}), and \Minfill{} (\textdef{minimal fill-in} \cite{dechter2003constraint}) as well as their inverses (\Invmcs, \Invlexp, \Invlexm, and \Invminfill).
Heuristics for $\clauseRank$ include \Be{} (\textdef{bucket elimination} \cite{dechter1999bucket}) and \Bm{} (\textdef{Bouquet's Method} \cite{bouquet1999gestion}).
For $\chosenCluster$, the heuristics we use are \ListH{} and \TreeH{} \cite{dudek2020addmc}.
We combine $\clauseRank$ and $\chosenCluster$ as \textdef{clustering heuristics}: $\Be-\ListH$, $\Be-\TreeH$, $\Bm-\ListH$, and $\Bm-\TreeH$.
These heuristics are described in \cite{dudek2020addmc}%
.


\subsection{Planning with Anytime Tree-Decomposition Tools}
\label{sec_td}

In join-query optimization, \emph{tree decompositions} can be used to compute join trees \cite{dalmau2002constraint,mcmahan2004projection}.
Tree decompositions \cite{robertson1991graph} decompose graphs into tree structures.
\begin{definition}[Tree Decomposition]
	A \emph{tree decomposition} $(S, \chi)$ of a graph $G$ is a tree $S$ with a labeling function $\chi : \V{S} \to 2^{\V{G}}$ where:
	\begin{enumerate}[ref=\arabic*]
		\item for all $v \in \V{G}$, there exists $n \in \V{S}$ \st{} $v \in \chi(n)$,
		\item for all $e \in \E{G}$, there exists $n \in \V{S}$ \st{} $\einc{G}{e} \subseteq \chi(n)$, and
		\item for all $n, o, p \in \V{S}$, if $o$ is on the path from $n$ to $p$, then $\chi(n) \cap \chi(p) \subseteq \chi(o)$. \label{prop_running_intersection}
	\end{enumerate}
	The \emph{treewidth}, or simply \emph{width}, of $(S, \chi)$ is $\func{tw}(S, \chi) \equiv \max_{n \in \V{S}} \size{\chi(n)} - 1.$
\end{definition}

In particular, join-query optimization uses tree decompositions of the \emph{join graph} to find optimal join trees \cite{dalmau2002constraint,mcmahan2004projection}.
The \emph{join graph} of a project-join query consists of all attributes of a database as vertices and all tables as cliques.
In this approach, tree decompositions of the join graph of a query are used to find optimal project-join trees; see Algorithm 3 of \cite{mcmahan2004projection}.
Similarly, tree decompositions of the \emph{primal graph} of a factor graph, which consists of all variables as vertices and all factors as cliques, can be used to find variable elimination orders \cite{kask2005unifying}.
This technique has also been applied in the context of tensor networks \cite{morgenstern2008ltl,dudek2019efficient}.

Translated to model counting, this technique allows us to use tree decompositions of the \textdef{Gaifman graph} of a CNF formula to compute project-join trees.
The Gaifman graph of a CNF formula $\phi$, denoted $\gaifman(\phi)$, has a vertex for each variable of $\phi$, and two vertices are adjacent if the corresponding variables appear together in some clause of $\phi$.
We present this tree-decomposition-based technique as Algorithm \ref{alg_td_to_join}.
The key idea is that each clause $c$ of $\phi$ forms a clique in $\gaifman(\phi)$ between the variables of $c$.
Thus all variables of $c$ must appear together in some label of the tree decomposition.
We identify that node with $c$.
\begin{algorithm*}[t]
\label{alg_td_to_join}
\caption{Using a tree decomposition to build a project-join tree}
    \DontPrintSemicolon
    \KwIn{$X$: set of Boolean variables}
    \KwIn{$\phi$: CNF formula over $X$}
    \KwIn{$(S, \chi)$: tree decomposition of the Gaifman graph of $\phi$}
    \KwOut{$(T, r, \gamma, \pi)$: project-join tree of $\phi$}
    $(T, \nil, \gamma, \pi) \gets \text{empty project-join tree}$\;
    $found \gets \emptyset$\tcc*{clauses of $\phi$ that have been added to $T$}
    \Function{\upshape $\func{Process}(n, \ell)$}{
        \KwIn{$n \in \V{S}$: node of $S$ to process}
        \KwIn{$\ell \subseteq X$: variables that must not be projected out here}
        \KwOut{$N \subseteq \V{T}$}
        $clauses \gets \{ c \in \phi : c \notin found~\text{and}~\vars(c) \subseteq \chi(n)\}$\; \label{line_clauses}
        $found \gets found \cup clauses$\; \label{line_found}
        $children \gets \{\leaf(T, c) : c \in clauses\} \cup \bigcup_{o \in \C(n)} \func{Process}(o, \chi(n))$\; \label{line_recur}
            \tcc*{new leaf nodes $p \in \V T$ with $\gamma(p) = c$}
        \If{$children = \emptyset$~\text{\upshape or}~$\chi(n) \subseteq \ell$}{
            \Return{children}
        }
        \Else{
            \Return{\upshape $\{\internal(T, children, \chi(n) \setminus \ell)\}$}\; \label{line_return_singleton}
                \tcc*{new internal node $o \in \V T$ with label $\pi(o) = \chi(n) \setminus \ell$}
        }
    }
    $s \gets$ arbitrary node of $S$ \label{line_arbitrary_node}
        \tcc*{fixing $s$ as root of $S$}
    $r \gets \text{only element of}~\func{Process}(s, \emptyset)$\;
    \Return{$(T, r, \gamma, \pi)$}
\end{algorithm*}

The width of the resulting project-join tree is closely connected to the width of the original tree decomposition.
We formalize this in the following theorem.
\begin{theorem}
\label{thm_td_to_join}
	Let $\phi$ be a CNF formula over a set $X$ of variables and $(S, \chi)$ be a tree decomposition of $\gaifman(\phi)$ of width $w$.
    Then Algorithm \ref{alg_td_to_join} returns a project-join tree of $\phi$ of width at most $w+1$.
\end{theorem}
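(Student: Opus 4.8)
The plan is to fix the arbitrary root $s$ of $S$ chosen by the algorithm and to analyze the recursion $\func{Process}$ by structural induction over the rooted tree $S$. Two standard facts about tree decompositions will be used throughout. First, for each variable $v \in X$, the set $S_v \equiv \set{n \in \V{S} : v \in \chi(n)}$ is a non-empty connected subtree of $S$; this is exactly the running-intersection property (Property~\ref{prop_running_intersection}). Second, for each clause $c \in \phi$, since $\vars(c)$ is a clique of $\gaifman(\phi)$, the set $S_c \equiv \set{n \in \V{S} : \vars(c) \subseteq \chi(n)} = \bigcap_{v \in \vars(c)} S_v$ is also a non-empty connected subtree, because the $S_v$ with $v \in \vars(c)$ are pairwise intersecting (each pair of variables of $c$ is an edge of $\gaifman(\phi)$) and subtrees of a tree have the Helly property. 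Let $\mathrm{top}(c)$ and $\mathrm{top}(v)$ denote the nodes of $S_c$ and $S_v$ closest to $s$, so that every node of $S_v$ lies in the subtree of $S$ rooted at $\mathrm{top}(v)$; in particular, if $v \in \vars(c)$ then $\mathrm{top}(c) \in S_c \subseteq S_v$, so $\mathrm{top}(c)$ lies in the subtree of $S$ rooted at $\mathrm{top}(v)$.

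First I would show that $\gamma$ is a bijection $\Lv{T} \to \phi$ and that the algorithm produces a well-formed rooted tree. Since $\func{Process}$ is eventually invoked on every node of $S$ and, on Line~\ref{line_clauses}, a clause $c$ is a candidate for claiming exactly at the nodes of $S_c$, the set $found$ together with the preorder nature of the traversal forces each $c$ to be claimed exactly once, namely at $\mathrm{top}(c)$, creating exactly one leaf for it. A structural induction on $S$ then shows that $\func{Process}(n, \ell)$ returns a set of nodes that are the roots of pairwise vertex-disjoint subtrees of the $T$ built so far, whose leaves are precisely $\set{\leaf(T,c) : c \text{ is claimed in the } S\text{-subtree rooted at } n}$ and whose internal nodes are precisely those created in that $S$-subtree, and moreover that each returned node is a $T$-ancestor of (or equal to) every leaf in its subtree. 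Applying this at $s$ (and treating trivially degenerate inputs, such as an empty $\phi$ or a root with an empty bag, separately) yields that $\func{Process}(s,\emptyset)$ returns a single node $r$ and that $(T,r)$ is a tree in which $\Lv{T}$ is in bijection with $\phi$ via $\gamma$ and $\pi$ is defined on exactly the internal nodes. The induction also records the structural fact needed below: if an internal node $o$ is created while processing $m \in \V{S}$, then $o$ is a $T$-ancestor of $\leaf(T,c)$ for every clause $c$ claimed anywhere in the $S$-subtree rooted at $m$.

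Next I would verify the two defining properties of a project-join tree (Definition~\ref{def_jointree}). An internal node created at $m$ receives the label $\chi(m) \setminus \chi(\mathrm{par}_S(m))$, where we read $\chi(\mathrm{par}_S(s)) = \emptyset$; by the running-intersection property this set is exactly $\set{v \in X : \mathrm{top}(v) = m}$. For each $v$, at $m = \mathrm{top}(v)$ we have $v \in \chi(m) \setminus \chi(\mathrm{par}_S(m))$, so this difference is non-empty; and $v$ occurs in some clause $c$, whose leaf is (by the structural fact above) a $T$-descendant of whatever is built at $m$, so the $children$ set at $m$ is non-empty. Hence the algorithm does create an internal node at $m = \mathrm{top}(v)$, and $v$ lies in exactly one label, giving Property~\ref{prop1}. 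For Property~\ref{prop2}: if $x \in \pi(o)$ with $o$ created at $m = \mathrm{top}(x)$ and $x \in \vars(c)$, then $\mathrm{top}(c)$ lies in the $S$-subtree rooted at $m$, so $c$ is claimed there, and by the structural fact $\leaf(T,c) = \gamma^{-1}(c)$ is a $T$-descendant of $o$.

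Finally, for the width bound, the key is a locality lemma: every node $n'$ returned by $\func{Process}(n,\ell)$ satisfies $\vars(n') \subseteq \chi(n) \cap \ell$. This is a short induction from the recurrence defining $\vars(\cdot)$: a freshly claimed leaf $\leaf(T,c)$ at $n$ has $\vars = \vars(c) \subseteq \chi(n)$ by Line~\ref{line_clauses}; a node coming from a recursive call $\func{Process}(o'', \chi(n))$ has, by the inductive hypothesis, $\vars \subseteq \chi(o'') \cap \chi(n) \subseteq \chi(n)$; so every element of the $children$ set at $n$ has $\vars \subseteq \chi(n)$, and then either this set is returned unchanged in the case $\chi(n) \subseteq \ell$, or an internal node $o$ is created and $\vars(o) = \pars{\bigcup_{o'} \vars(o')} \setminus \pi(o) \subseteq \chi(n) \setminus (\chi(n)\setminus\ell) \subseteq \chi(n) \cap \ell$. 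Granting the lemma: a leaf $\leaf(T,c)$, claimed at $\mathrm{top}(c)$, has $\func{size} = \size{\vars(c)} \le \size{\chi(\mathrm{top}(c))} \le w+1$; and an internal node $o$ created at $m$ has every $T$-child's variable set contained in $\chi(m)$ (leaves by the claiming condition, the others by the lemma applied with $\ell = \chi(m)$), so $\vars(o) \subseteq \chi(m)$, and since also $\pi(o) \subseteq \chi(m)$ we get $\func{size}(o) = \size{\vars(o) \cup \pi(o)} \le \size{\chi(m)} \le w+1$. Hence $\func{width}(T) \le w+1$. The step I expect to be most delicate is the structural induction on $\func{Process}$ --- precisely pinning down which $T$-nodes each call returns, so that the ``single tree'' conclusion and the ancestor fact used for Property~\ref{prop2} both drop out cleanly; the locality lemma and the width count are then routine.
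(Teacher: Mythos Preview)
Your proposal is correct and follows essentially the same route as the paper: a structural induction on the recursion $\func{Process}$, with the key ``locality lemma'' (the paper's Lemma~\ref{lem:td-bounds}, stating that every node returned by $\func{Process}(n,\ell)$ has its variables contained in $\ell$) driving both the project-join-tree properties and the width bound. Your organization via $\mathrm{top}(v)$, $\mathrm{top}(c)$, and the Helly property is a clean combinatorial repackaging of what the paper does more algorithmically by tracking the origin map $O(a)$ and the parent-disjointness fact (Lemma~\ref{lem:td-parent}); you also spell out the width bound explicitly, which the paper leaves largely implicit in its appendix, but the underlying argument is the same.
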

The key idea is that, for each node $n \in \V{S}$, the label $\chi(n)$ is a bound on the variables that appear in all nodes returned by $\func{Process}(n, \cdot)$.
Theorem \ref{thm_td_to_join} allows us to leverage state-of-the-art anytime tools for finding tree decompositions \cite{tamaki2019positive,strasser2017computing,abseher2017htd} to construct project-join trees, which we do in Section \ref{sec_experiments_planning}.

On the theoretical front, it is well-known that tree decompositions of the Gaifman graph are actually equivalent to project-join trees \cite{mcmahan2004projection}.
That is, one can go in the other direction as well: given a project-join tree of $\phi$, one can construct a tree decomposition of $\gaifman(\phi)$ of equivalent width.
Formally:
\begin{theorem}
\label{thm_join_to_td}
    Let $\phi$ be a CNF formula and $(T, r, \gamma, \pi)$ be a project-join tree of $\phi$ of width $w$.
    Then there is a tree decomposition of $\gaifman(\phi)$ of width $w-1$.
\end{theorem}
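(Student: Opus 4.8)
The plan is to reuse the tree $T$ itself, with its root forgotten, as the skeleton of the tree decomposition and to label each node with the variable set it already carries in the project-join computation. Concretely, set $S \equiv T$ and define $\chi : \V{S} \to 2^X$ by $\chi(n) \equiv \vars(n)$ for $n \in \Lv{T}$ and $\chi(n) \equiv \vars(n) \cup \pi(n)$ for $n \notin \Lv{T}$. With this choice $\size{\chi(n)} = \func{size}(n)$ at every node, so once $(S, \chi)$ is shown to be a tree decomposition of $\gaifman(\phi)$ its treewidth is $\func{width}(T) - 1 = w - 1$, exactly as claimed. Everything therefore reduces to checking the three defining conditions of a tree decomposition.

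The first two conditions are immediate. For vertex coverage, every $x \in X$ lies in $\pi(n) \subseteq \chi(n)$ for some internal node $n$, since $\set{\pi(n) : n \notin \Lv{T}}$ partitions $X$ by Property~\ref{prop1}. For edge coverage, an edge of $\gaifman(\phi)$ joins two variables $u, v$ that occur together in some clause $c$, so $\set{u, v} \subseteq \vars(c) = \vars(\gamma^{-1}(c)) = \chi(\gamma^{-1}(c))$, and $\gamma^{-1}(c)$ is a node of $S$ because $\gamma$ is a bijection.

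The remaining condition, running intersection (Property~\ref{prop_running_intersection}), is the heart of the argument and the step I expect to be the main obstacle. It is equivalent to showing that for each variable $x$ the set $A_x \equiv \set{n \in \V{S} : x \in \chi(n)}$ induces a connected subtree of $T$. Here I would first use Property~\ref{prop1} to fix the unique internal node $n_x$ with $x \in \pi(n_x)$, so that $n_x \in A_x$. I would then unfold the recursive definition of $\vars$ to show that, for any node $n$, one has $x \in \vars(n)$ precisely when $n$ has a descendant leaf $\ell$ with $x \in \vars(\gamma(\ell))$ and no internal node on the downward path from $n$ to $\ell$ (including $n$) carries $x$ in its $\pi$-label. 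Because $n_x$ is the only node whose $\pi$-label contains $x$, and because Property~\ref{prop2} forces every such leaf $\ell$ to lie below $n_x$, a short case analysis on the position of $n$ relative to $n_x$ then shows that $A_x$ is exactly $n_x$ together with those strict descendants of $n_x$ that have a descendant leaf whose clause contains $x$. Connectivity follows since, for any $n \in A_x \setminus \set{n_x}$ with witnessing leaf $\ell$, every node on the path in $T$ from $n$ up to $n_x$ is a strict descendant of $n_x$ that still has $\ell$ as a descendant leaf (or is $n_x$ itself), hence lies in $A_x$; the union of these paths over all of $A_x$ is a connected subgraph of $T$ contained in $A_x$, which suffices. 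The delicate point is ruling out that $x$ reappears in $\vars(m)$ for a node $m$ that is a strict ancestor of $n_x$ or incomparable with it — this is where Property~\ref{prop2} together with the uniqueness of $n_x$ from Property~\ref{prop1} is essential — along with the trivial degenerate case where $x$ occurs in no clause and $A_x = \set{n_x}$.
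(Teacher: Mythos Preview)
Your proposal is correct and takes exactly the same approach as the paper: define $\chi(n) = \vars(n)$ for leaves and $\chi(n) = \vars(n) \cup \pi(n)$ for internal nodes, so that $\size{\chi(n)} = \func{size}(n)$ and the width comes out to $w-1$. The paper's proof simply asserts that $(T,\chi)$ is a tree decomposition and cites an external reference, whereas you actually carry out the verification of the three defining properties; your running-intersection argument via the unique node $n_x$ and Property~\ref{prop2} is sound.
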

Theorem \ref{thm_join_to_td} is Lemma 1 of \cite{mcmahan2004projection} and can be seen as the inverse of Theorem \ref{thm_td_to_join}.

\section{Execution Phase: Performing the Valuation}
\label{sec_execution}

The execution phase involves a CNF formula $\phi$ over variables $X$, a project-join tree $(T, r, \gamma, \pi)$ of $\phi$, and a literal-weight function $W$ over $X$.
The goal is to compute the valuation $f^W_r$ using Equation \eqref{eq_valuation}.
Several data structures can be used for the pseudo-Boolean functions that occur while using Equation \eqref{eq_valuation}.
In this work, we consider two data structures that have been applied to weighted model counting: ADDs (as in \cite{dudek2020addmc}) and tensors (as in \cite{dudek2019efficient}).



\subsection{Algebraic Decision Diagrams}

An \emph{algebraic decision diagram (ADD)} is a compact representation of a pseudo-Boolean function as a directed acyclic graph \cite{bahar1997algebraic}.
For functions with logical structure, an ADD representation can be exponentially smaller than the explicit representation.
Originally designed for matrix multiplication and shortest path algorithms, ADDs have also been used for Bayesian inference \cite{chavira2007compiling,gogate2011approximation}, stochastic planning \cite{hoey1999spudd}, model checking \cite{kwiatkowska2007stochastic}, and model counting \cite{fargier2014knowledge,dudek2020addmc}.

Formally, an ADD is a tuple $(X, S, \sigma, G)$, where $X$ is a set of Boolean variables, $S$ is an arbitrary set (called the \textdef{carrier set}), $\sigma: X \to \N$ is an injection (called the \textdef{diagram variable order}), and $G$ is a rooted directed acyclic graph satisfying the following three properties.
First, every leaf node of $G$ is labeled with an element of $S$.
Second, every internal node of $G$ is labeled with an element of $X$ and has two outgoing edges, labeled 0 and 1.
Finally, for every path in $G$, the labels of internal nodes must occur in increasing order under $\sigma$.
In this work, we only need to consider ADDs with the carrier set $S = \mathbb{R}$.

An ADD $(X, S, \sigma, G)$ is a compact representation of a function $f: 2^X \to S$.
Although there are many ADDs representing $f$, for each injection $\sigma: X \to \N$, there is a unique minimal ADD that represents $f$ with $\sigma$ as the diagram variable order, called the \textdef{canonical ADD}.
ADDs can be minimized in polynomial time, so it is typical to only work with canonical ADDs.

Several packages exist for efficiently manipulating ADDs.
For example, \cudd{} \cite{somenzi2015cudd} implements both product and projection on ADDs in polynomial time (in the size of the ADD representation).
\cudd{} was used as the primary data structure for weighted model counting in \cite{dudek2020addmc}.
In this work, we also use ADDs with \cudd{} to compute $W$-valuations.

\Mcs{} was the best diagram variable order on a set of standard weighted model counting benchmarks in \cite{dudek2020addmc}.
So we use \Mcs{} as the diagram variable order in this work.
Note that all other heuristics discussed in Section \ref{sec_csp} for cluster variable order could also be used as heuristics for diagram variable order.

\subsection{Tensors}

A \emph{tensor} is a multi-dimensional generalization of a matrix.
Tensor are widely used in data analysis \cite{cichocki2014era}, signal and image processing \cite{cichocki2015tensor}, quantum physics \cite{arad2010quantum}, quantum chemistry \cite{smilde2005multi}, and many other areas of science.
Given the diverse applications of tensors and tensor networks, a variety of tools \cite{baumgartner2005synthesis,kjolstad2017tensor} exist to manipulate them efficiently on a variety of hardware architectures, including multi-core and GPU-enhanced architectures.

Tensors can be used to represent pseudo-Boolean functions in a dense way.
Tensors are particularly efficient at computing the contraction of two pseudo-Boolean functions: given two functions $f: 2^X \to \mathbb{R}$ and $g: 2^Y \to \mathbb{R}$, their \emph{contraction} $f \contract g$ is the pseudo-Boolean function $\proj_{X \cap Y} f \cdot g$.
The contraction of two tensors can be implemented as matrix multiplication and so leverage significant work in high-performance computing on matrix multiplication on CPUs \cite{lawson1979basic} and GPUs \cite{fatahalian2004understanding}.
To efficiently use tensors to compute $W$-valuations, we follow \cite{dudek2019efficient} in implementing projection and product using tensor contraction.

First, we must compute the weighted projection of a function $f: 2^X \to \mathbb{R}$, \ie, we must compute $\proj_x f \cdot W_x$ for some $x \in X$.
This is exactly equivalent to $f \contract W_x$.
Second, we must compute the product of two functions $f: 2^X \to \mathbb{R}$ and $g: 2^Y \to \mathbb{R}$.
The central challenge is that tensor contraction implicitly projects all variables in $X \cap Y$, but we often need to maintain some shared variables in the result of $f \cdot g$.
In \cite{dudek2019efficient}, this problem was solved using a reduction to tensor networks.
After the reduction, all variables appear exactly twice, so one never needs to perform a product without also projecting all shared variables.

In order to incorporate tensors in our project-join-tree-based framework, we take a different strategy that uses copy tensors.
The \emph{copy tensor} for a set $X$ represents the pseudo-Boolean function $\blacksquare_X: 2^X \to \mathbb{R}$ \st{} $\blacksquare_X(\tau)$ is $1$ if $\tau \in \{ \emptyset, X \}$ and $0$ otherwise.
We can simulate product using contraction by including additional copy tensors.
In detail, for each $z \in X \cap Y$ make two fresh variables $z'$ and $z''$.
Replace each $z$ in $f$ with $z'$ to produce $f'$, and replace each $z$ in $g$ with $z''$ to produce $g'$.
Then one can check that $f \cdot g = f' \contract g' \contract \bigcontract_{z \in X \cap Y} \blacksquare_{\{z, z', z''\}}$.

When a product is immediately followed by the projection of shared variables (\ie, we are computing $\proj_Z f \cdot g$ for some $Z \subseteq X \cap Y$), we can optimize this procedure.
In particular, we skip creating copy tensors for the variables in $Z$ and instead eliminate them directly as we perform $f' \contract g'$.
In this case, we do not ever fully compute $f \cdot g$, so the maximum number of variables needed in each intermediate tensor may be lower than the width of the project-join tree.
In the context of tensor networks and contraction trees, the maximum number of variables needed after accounting for this optimization is the \emph{max-rank} of the contraction tree \cite{kourtis2019fast,dudek2019efficient}.
The max-rank is often lower than the width of the corresponding project-joint tree.
On the other hand, the intermediate terms in the computation of $f \cdot g$ with contractions may have more variables than either $f$, $g$, or $f \cdot g$.
Thus the number of variables in each intermediate tensor may be higher than the width of the project-join tree (by at most a factor of 1.5).

\section{Empirical Evaluation}
\label{sec_experiments}

We are interested in the following experimental research questions, where we aim to answer each research question with an experiment.
\begin{itemize}
    \item[(RQ1)] In the planning phase, how do constraint-satisfaction heuristics compare to tree-decomposition solvers?
    \item[(RQ2)] In the execution phase, how do ADDs compare to tensors as the underlying data structure?
    \item[(RQ3)] Are project-join-tree-based weighted model counters competitive with state-of-the-art tools?
\end{itemize}

To answer RQ1, we build two implementations of the planning phase: \Htb{} (for Heuristic Tree Builder, based on \cite{dudek2020addmc}) and \Lg{} (for Line Graph, based on \cite{dudek2019efficient}).
\Htb{} implements Algorithm \ref{alg_csp_jt} and so is representative of the constraint-satisfaction approach.
\Htb{} contains implementations of four clustering heuristics (\Be-\ListH, \Be-\TreeH, \Bm-\ListH, and \Bm-\TreeH) and nine cluster-variable-order heuristics (\Random, \Mcs, \Invmcs, \Lexp, \Invlexp, \Lexm, \Invlexm, \Minfill, and \Invminfill).
\Lg{} implements Algorithm \ref{alg_td_to_join} and so is representative of the tree-decomposition approach.
In order to find tree decompositions, \Lg{} leverages three state-of-the-art heuristic tree-decomposition solvers: \Flowcutter{} \cite{strasser2017computing}, \Htd{} \cite{abseher2017htd}, and \Tamaki{} \cite{tamaki2019positive}.
These solvers are all \emph{anytime}, meaning that \Lg{} never halts but continues to produce better and better project-join trees when given additional time.
On the other hand, \Htb{} produces a single project-join tree.
We compare these implementations on the planning phase in Section \ref{sec_experiments_planning}.

To answer RQ2, we build two implementations of the execution phase: \Dmc{} (for Diagram Model Counter, based on \cite{dudek2020addmc}) and \Tensor{} (based on \cite{dudek2019efficient}).
\Dmc{} uses ADDs as the underlying data structure with \cudd{} \cite{somenzi2015cudd}.
\Tensor{} uses tensors as the underlying data structure with \Numpy{} \cite{numpy}.
We compare these implementations on the execution phase in Section \ref{sec_experiments_execution}.
Since \Lg{} is an anytime tool, each execution tool must additionally determine the best time to terminate \Lg{} and begin performing the valuation.
We explore options for this in Section \ref{sec_experiments_execution}.

To answer RQ3, we combine each implementation of the planning phase and each implementation of the execution phase to produce model counters that use project-join trees.
We then compare these model counters with the state-of-the-art tools \cachet{} \cite{sang2004combining}, \ctd{} \cite{darwiche2004new}, \df{} \cite{lagniez2017improved}, and \minictd{} \cite{oztok2015top} in Section \ref{sec_experiments_wmc}.

We use a set of \benchmarkCountAltogether{} literal-weighted model counting benchmarks from \cite{dudek2020addmc}.
These benchmarks were gathered from two sources.
First, the \classBayes{} class%
\footnote{\urlBenchmarksBayes}
consists of \benchmarkCountBayes{} CNF benchmarks%
\footnote{excluding 11 benchmarks double-counted by \cite{dudek2020addmc}}
that encode Bayesian inference problems \cite{sang2005performing}.
All literal weights in this class are between 0 and 1. 
Second, the \classOther{} class%
\footnote{\urlBenchmarksOther}
consists of \benchmarkCountOther{} CNF benchmarks%
\footnote{including 73 benchmarks missed by \cite{dudek2020addmc}}
that are divided into eight families: \famBmc, \famCircuit, \famConfig, \famHandmade, \famPlanning, \famQif, \famRandom, and \famSchedule{} \cite{clarke2001bounded,sinz2003formal,palacios2009compiling,klebanov2013sat}.
All \classOther{} benchmarks are originally unweighted.
As we focus in this work on weighted model counting, we generate weights for these benchmarks.
Each variable $x$ is randomly assigned literal weights: either $W_x(\set{x}) = 0.5$ and $W_x(\emptyset) = 1.5$, or $W_x(\set{x}) = 1.5$ and $W_x(\emptyset) = 0.5$.
Generating weights in this particular fashion results in a reasonably low amount of floating-point underflow and overflow for all model counters.

We ran all experiments on single CPU cores of a Linux cluster with Xeon E5-2650v2 processors (2.60-GHz) and 30 GB of memory.
All code, benchmarks, and experimental data are available in a public repository (\url{https://github.com/vardigroup/DPMC}).


\subsection{Experiment 1: Comparing Project-Join Planners}
\label{sec_experiments_planning}

\begin{figure}[t]
	\centering
	\input{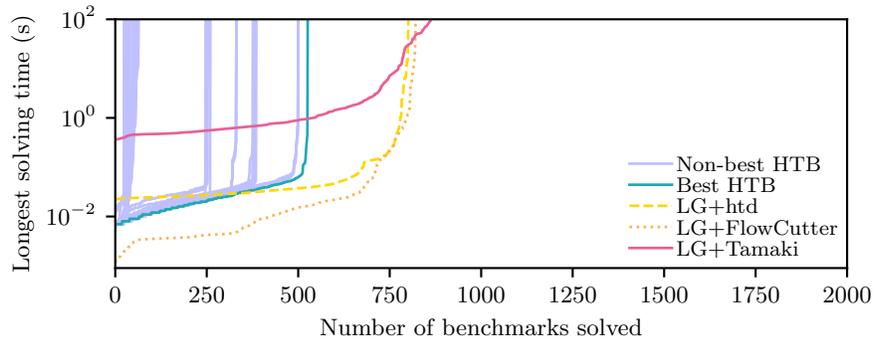}
    \vspace*{-1cm}
	\caption{\label{fig:planning} A cactus plot of the performance of various planners.
	A planner ``solves'' a benchmark when it finds a project-join tree of width 30 or lower.}
\end{figure}

We first compare constraint-satisfaction heuristics (\Htb) and tree-decomposition tools (\Lg) at building project-join trees of CNF formulas.
To do this, we ran all 36 configurations of \Htb{} (combining four clustering heuristics with nine cluster-variable-order heuristics) and all three configurations of \Lg{} (choosing a tree-decomposition solver) once on each benchmark with a 100-second timeout.
In Figure \ref{fig:planning}, we compare how long it takes various methods to find a high-quality (meaning width at most 30) project-join tree of each benchmark.
We chose 30 for Figure \ref{fig:planning} since \cite{dudek2019efficient} observed that tensor-based approaches were unable to handle trees whose widths are above 30, but Figure \ref{fig:planning} is qualitatively similar for other choices of widths.
We observe that \Lg{} is generally able to find project-join trees of lower widths than those \Htb{} is able to find.
We therefore conclude that tree-decomposition solvers outperform constraint-satisfaction heuristics in this case.
We observe that \Be-\TreeH{} as the clustering heuristic and \Invlexp{} as the cluster-variable-order heuristic make up the best-performing heuristic configuration from \Htb.
This was previously observed to be the second-best heuristic configuration for weighted model counting in \cite{dudek2020addmc}.
We therefore choose \Be-\TreeH{} with \Invlexp{} as the representative heuristic configuration for \Htb{} in the remaining experiments.
For \Lg{}, we choose \Flowcutter{} as the representative tree-decomposition tool in the remaining experiments.


\subsection{Experiment 2: Comparing Execution Environments}
\label{sec_experiments_execution}

\begin{figure}[t]
	\centering
	\input{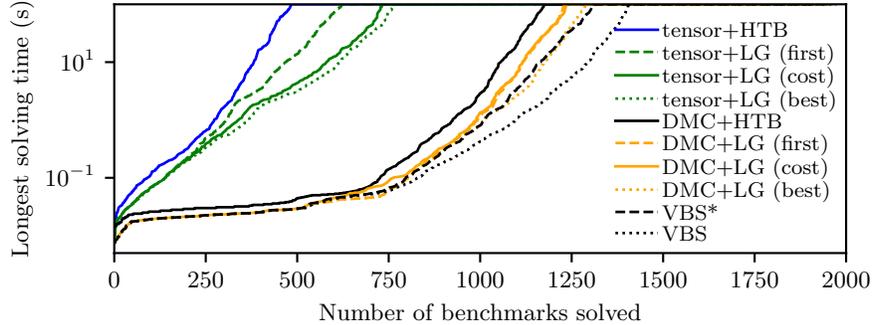}
    \vspace*{-1cm}
	\caption{
	A cactus plot of the performance of various planners and executors for weighted model counting.
    Different strategies for stopping \Lg{} are considered.
    ``(first)'' indicates that \Lg{} was stopped after it produced the first project-join tree.
    ``(cost)'' indicates that the executor attempted to predict the cost of computing each project-join tree.
    ``(best)'' indicates a simulated case where the executor has perfect information on all project-join trees generated by \Lg{} and valuates the tree with the shortest total time.
    \tool{VBS*} is the virtual best solver of \Dmc{}+\Htb{} and \Dmc{}+\Lg{} (cost).
	\tool{VBS} is the virtual best solver of \Dmc{}+\Htb{}, \Dmc{}+\Lg{} (cost), \Tensor{}+\Htb{}, and \Tensor{}+\Lg{} (cost).}
	\label{fig:execution}
\end{figure}

Next, we compare ADDs (\Dmc) and tensors (\Tensor) as a data structure for valuating project-join trees.
To do this, we ran both \Dmc{} and \Tensor{} on all project-join trees generated by \Htb{} and \Lg{} (with their representative configurations) in Experiment 1, each with a 100-second timeout.
The total times recorded include both the planning stage and the execution stage.

Since \Lg{} is an anytime tool, it may have produced more than one project-join tree of each benchmark in Experiment 1.
We follow \cite{dudek2019efficient} by allowing \Tensor{} and \Dmc{} to stop \Lg{} at a time proportional to the estimated cost to valuate the best-seen project-join tree.
The constant of proportionality is chosen to minimize the PAR-2 score (\ie, the sum of the running times of all completed benchmarks plus twice the timeout for every uncompleted benchmark) of each executor.
\Tensor{} and \Dmc{} use different methods for estimating cost.
Tensors are a dense data structure, so the number of floating-point operations to valuate a project-join tree can be computed exactly as in \cite{dudek2019efficient}.
We use this as the cost estimator for \Tensor{}.
ADDs are a sparse data structure, and estimating the amount of sparsity is difficult.
It is thus hard to find a good cost estimator for \Dmc{}.
As a first step, we use $2^w$ as an estimate of the cost for \Dmc{} to valuate a project-join tree of width $w$.

We present results from this experiment in Figure \ref{fig:execution}.
We observe that the benefit of \Lg{} over \Htb{} seen in Experiment 1 is maintained once the full weighted model count is computed.
We also observe that \Dmc{} is able to solve significantly more benchmarks than \Tensor{}, even when using identical project-join trees.
We attribute this difference to the sparsity of ADDs over tensors.
Nevertheless, we observe that \Tensor{} still outperforms \Dmc{} on some benchmarks; compare \tool{VBS*} (which excludes \Tensor{}) with \tool{VBS} (which includes \Tensor{}).

Moreover, we observe significant differences based on the strategy used to stop \Lg{}.
The executor \Tensor{} performs significantly better when cost estimation is used than when only the first project-join tree of \Lg{} is used.
In fact, the performance of \Tensor{} is almost as good as the hypothetical performance if \Tensor{} is able to predict the planning and valuation times of all trees produced by \Lg{}.
On the other hand, \Dmc{} is not significantly improved by cost estimation.
It would be interesting in the future to find better cost estimators for \Dmc{}.


\subsection{Experiment 3: Comparing Exact Weighted Model Counters}
\label{sec_experiments_wmc}

\begin{figure}[t]
	\centering
	\input{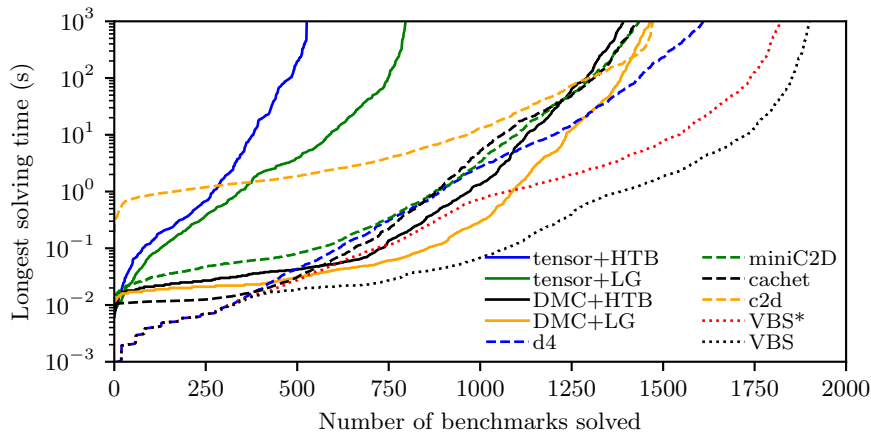}
    \vspace*{-1cm}
	\caption{\label{fig:comparison} A cactus plot of the performance of four project-join-tree-based model counters, two state-of-the-art model counters, and two virtual best solvers: \tool{VBS*} (without project-join-tree-based counters) and \tool{VBS} (with project-join-tree-based counters).}
\end{figure}

Finally, we compare project-join-tree-based model counters with state-of-the-art tools for weighted model counting.
We construct four project-join-tree-based model counters by combining \Htb{} and \Lg{} (using the representative configurations from Experiment 1) with \Dmc{} and \Tensor{} (using the cost estimators for \Lg{} from Experiment 2).
Note that \Dmc{}+\Htb{} is equivalent to \tool{ADDMC} \cite{dudek2020addmc}, and \Tensor{}+\Lg{} is equivalent to \tool{TensorOrder} \cite{dudek2019efficient}.
We compare against the state-of-the-art model counters \cachet{} \cite{sang2004combining}, \ctd{} \cite{darwiche2004new}, \df{} \cite{lagniez2017improved}, and \minictd{} \cite{oztok2015top}.
We ran each benchmark once with each model counter with a 1000-second timeout and recorded the total time taken.
For the project-join-tree-based model counters, time taken includes both the planning stage and the execution stage.

We present results from this experiment in Figure \ref{fig:comparison}.
For each benchmark, the solving time of \tool{VBS*} is the shortest solving time among all pre-existing model counters (\cachet, \ctd, \df, and \minictd).
Similarly, the time of \tool{VBS} is the shortest time among all model counters, including those based on project-join trees.
We observe that \tool{VBS} performs significantly better than \tool{VBS*}.
In fact, \Dmc{}+\Lg{} is the fastest model counter on 471
of \benchmarkCountAltogether{}
benchmarks.
Thus project-join-tree-based tools are valuable for portfolios of weighted model counters.

\section{Discussion}
\label{sec_discussion}

In this work, we introduced the concept of project-join trees for weighted model counting.
These trees are at the center of a dynamic-programming framework that unifies and generalizes several model counting algorithms, including those based on ADDs \cite{dudek2020addmc}, tensors \cite{dudek2019efficient}, and database management systems \cite{fichte2020exploiting}.
This framework performs model counting in two phases.
First, the planning phase produces a project-join tree from a CNF formula.
Second, the execution phase uses the project-join tree to guide the dynamic-programming computation of the model count of the formula \wrt{} a literal-weight function.
The current implementation of our dynamic-programming model-counting framework \Dpmc{} includes two planners (\Htb{} and \Lg) and two executors (\Dmc{} and \Tensor{}).

For the planning phase, we implemented \Htb{} based on constraint-satisfaction heuristics
\cite{tarjan1984simple,koster2001treewidth,dechter2003constraint,dechter1999bucket,bouquet1999gestion}
and \Lg{} based on tree-decomposition tools \cite{strasser2017computing,tamaki2019positive,abseher2017htd}.
Our empirical work indicates that tree-decomposition tools tend to produce project-join trees of lower widths in shorter times.
This is a significant finding with applications beyond model counting, \eg, in Boolean functional synthesis \cite{tabajara2017factored}.

For the execution phase, we implemented \Dmc{} based on ADDs \cite{dudek2020addmc,somenzi2015cudd} and \Tensor{} based on tensors \cite{dudek2019efficient,numpy}.
Empirically, we observed that (sparse) ADDs outperform (dense) tensors on single CPU cores.
Whether this holds for richer architectures as well is a subject for future work.
We will also consider adding to our framework an executor based on databases (\eg, \cite{fichte2020exploiting}).

We showed that our dynamic-programming model-counting framework \Dpmc{} is competitive with state-of-the-art tools (\cachet{} \cite{sang2004combining}, \ctd{} \cite{darwiche2004new}, \df{} \cite{lagniez2017improved}, and \minictd{} \cite{oztok2015top}).
Although no single model counter dominates,
\Dpmc{} considerably improves the virtual best solver and thus is valuable as part of the portfolio.

In this work, we did not consider preprocessing of benchmarks.
For example, \cite{dudek2019efficient} found that preprocessing (called \pkg{FT}, based on a technique to reduce variable occurrences using tree decompositions of the incidence graph \cite{samer2010constraint}) significantly improved tensor-network-based approaches for weighted model counting.
Moreover, \cite{fichte2019improved} and \cite{dudek2020parallel} observed that the \tool{pmc} preprocessor \cite{lagniez2014preprocessing} notably improved the running time of some dynamic-programming-based model counters.
We expect these techniques to also improve \tool{DPMC}.

A promising future research direction is multicore programming.
Our planning tool \Lg{} can be improved to run back-end tree-decomposition solvers in parallel, as in \cite{dudek2020parallel}.
We can also make the execution tool \Dmc{} support multicore ADD packages (\eg, \sylvan{} \cite{van2015sylvan}).
Our other executor, \Tensor{}, is built on top of \Numpy{} \cite{numpy} and should be readily parallelizable (\eg, using techniques from \cite{dudek2020parallel}).
We can then compare \Dpmc{} to parallel solvers (\eg, \cite{dal2018parallel,burchard2015laissez}).

Finally, decision diagrams have been widely used in artificial intelligence in the context of \emph{knowledge compilation}, where formulas are compiled into a tractable form in an early phase to support efficient query processing \cite{koriche2013knowledge,lagniez2017improved,darwiche2004new,oztok2015top}.
Our work opens up an investigation into the combination of knowledge compilation and dynamic programming.
The focus here is on processing a single model-counting query.
Exploring how dynamic programming can also be leveraged to handle several queries is another promising research direction.


\bibliographystyle{templates/splncs04}
\bibliography{DPMC}


\clearpage
\appendix

\section{Constraint-Satisfaction Heuristics for Project-Join Tree Planning}
\label{sec_csp_heuristics}


\subsection{Heuristics for $\clusterVarOrder$}

In Algorithm \ref{alg_csp_jt}, the function $\clusterVarOrder$ returns a variable order what will be used to rank the clauses of $\phi$.
We consider nine heuristics for variable ordering: \Random, \Mcs, \Lexp, \Lexm, \Minfill, \Invmcs, \Invlexp, \Invlexm, and \Invminfill.

One simple heuristic for $\clusterVarOrder$ is to randomly order the variables, \ie, for a formula over some set $X$ of variables, sample an injection $X \to \set{1, 2, \ldots, |X|}$ uniformly at random.
We call this the \Random{} heuristic.
\Random{} is a baseline to compare other variable-order heuristics.

For the remaining heuristics, we use {Gaifman graphs} of CNF formulas. 
Recall that the Gaifman graph $\gaifman(\phi)$ of a CNF formula $\phi$ has a vertex for each variable of $\phi$.
Also, two vertices of $\gaifman(\phi)$ are connected by an edge if and only if the corresponding variables appear together in some clause of $\phi$.
We say that two variables of $\phi$ are \emph{adjacent} if the corresponding two vertices of $\gaifman(\phi)$ are neighbors.

A well-known heuristic for $\clusterVarOrder$ is \textdef{maximum-cardinality search} \cite{tarjan1984simple}.
At each step of the heuristic, the next variable chosen is the variable adjacent to the greatest number of previously chosen variables.
We call this the \Mcs{} heuristic for variable ordering.

Another heuristic is \textdef{lexicographic search for perfect orders} \cite{koster2001treewidth}.
Every vertex $v$ of $\gaifman(\phi)$ is assigned an initially empty set of vertices, called the \textdef{label} of $v$.
At each step of the heuristic, the next variable chosen is the variable $x$ whose label is lexicographically smallest among the unchosen variables.
Then $x$ is added to the labels of its neighbors in $\gaifman(\phi)$.
We call this the \Lexp{} heuristic for variable ordering.

A similar heuristic is \textdef{lexicographic search for minimal orders} \cite{koster2001treewidth}.
As before, each vertex of $\gaifman(\phi)$ is assigned an initially empty label.
At each step of the heuristic, the next variable chosen is again the variable $x$ whose label is lexicographically smallest.
Then $x$ is added to the label of every variable $y$ \st{} there is a path $x, z_1, z_2, \ldots, z_k, y$ in $\gaifman(\phi)$ where every $z_i$ is unchosen and the label of $z_i$ is lexicographically smaller than the label of $y$.
We call this the \Lexm{} heuristic for variable ordering.

A different heuristic is \textdef{minimal fill-in} \cite{dechter2003constraint}.
Whenever a variable $v$ is chosen, we add \textdef{fill-in} edges to connect all of $v$'s neighbors in the Gaifman graph.
At each step of the heuristic, the next variable chosen is the variable that minimizes the number of fill-in edges.
We call this the \Minfill{} heuristic for variable ordering.

Additionally, the variable orders produced by \Mcs{}, \Lexp{}, \Lexm, and \Minfill{} can be inverted.
We call these heuristics \Invmcs, \Invlexp, \Invlexm, and \Invminfill.


\subsection{Heuristics for $\clauseRank$}

In Algorithm \ref{alg_csp_jt}, given a cluster variable order $\rho$, we partition the clauses of $\phi$ by calling the function $\clauseRank$.
We consider two possible heuristics for $\clauseRank$ that satisfy the conditions of Theorem \ref{thm_csp_jt}: \Be{} and \Bm{}.

One heuristic assigns the rank of each clause $c \in \phi$ to be the smallest $\rho$-rank of the variables that appear in $c$, \ie, $\clauseRank(c, \rho) = \min_{x \in \vars(c)} \rho(x)$.
This heuristic corresponds to \textdef{bucket elimination} \cite{dechter1999bucket}, so we call it the \Be{} heuristic.
Using \Be{} for $\clauseRank$ in Algorithm \ref{alg_csp_jt}, notice that every CNF clause $c$ containing a variable $x \in X$ can only appear in a set $\Gamma_i$ if $i \le \rho(x)$.
It follows that $x$ has always been projected from all clauses by the end of iteration $\rho(x)$ in the second loop.

A different heuristic assigns the rank of each clause to be the largest $\rho$-rank of the variables that appear in the clause.
That is, $\clauseRank(c, \rho) = \max_{x \in \vars(c)} \rho(x)$.
This heuristic corresponds to \textdef{Bouquet's Method} \cite{bouquet1999gestion}, so we call it the \Bm{} heuristic.
Unlike the \Be{} case, we can make no guarantee about when each variable is projected in Algorithm \ref{alg_csp_jt} using \Bm{}.


\subsection{Heuristics for $\chosenCluster$}

In Algorithm \ref{alg_csp_jt}, the function $\chosenCluster$ determines the parent of the current internal node.
We consider two possible heuristics to use for $\chosenCluster$ that satisfy the conditions of Theorem \ref{thm_csp_jt}: \heuristic{List} and \heuristic{Tree} \cite{dudek2020addmc}.

One option is for $\chosenCluster$ to place the internal node $n_i$ in the nearest cluster that satisfies the conditions of Theorem \ref{thm_csp_jt}, namely the next cluster to be processed.
That is, $\chosenCluster(n_i) = i + 1$.
We call this the \heuristic{List} heuristic.
Notice that project-join trees are left-deep with \ListH{}.

Another option is for $\chosenCluster$ to place $n_i$ in the furthest cluster that satisfies the conditions of Theorem \ref{thm_csp_jt}.
That is, $\chosenCluster(n_i)$ returns the smallest $j > i$ \st{} $X_j \cap \vars(n_i) \ne \emptyset$ (or returns $m$, if $\vars(n_i) = \emptyset)$.
We call this the \heuristic{Tree} heuristic.
Project-join trees with the \TreeH{} heuristic are more balanced than those with the \ListH{} heuristic.

\section{Examples}


Figure \ref{fig_add} illustrates an algebraic decision diagram (ADD).

\begin{figure}
  \centering
  \includegraphics[
    trim={2in .5in .1in .5in} 
  ]
  {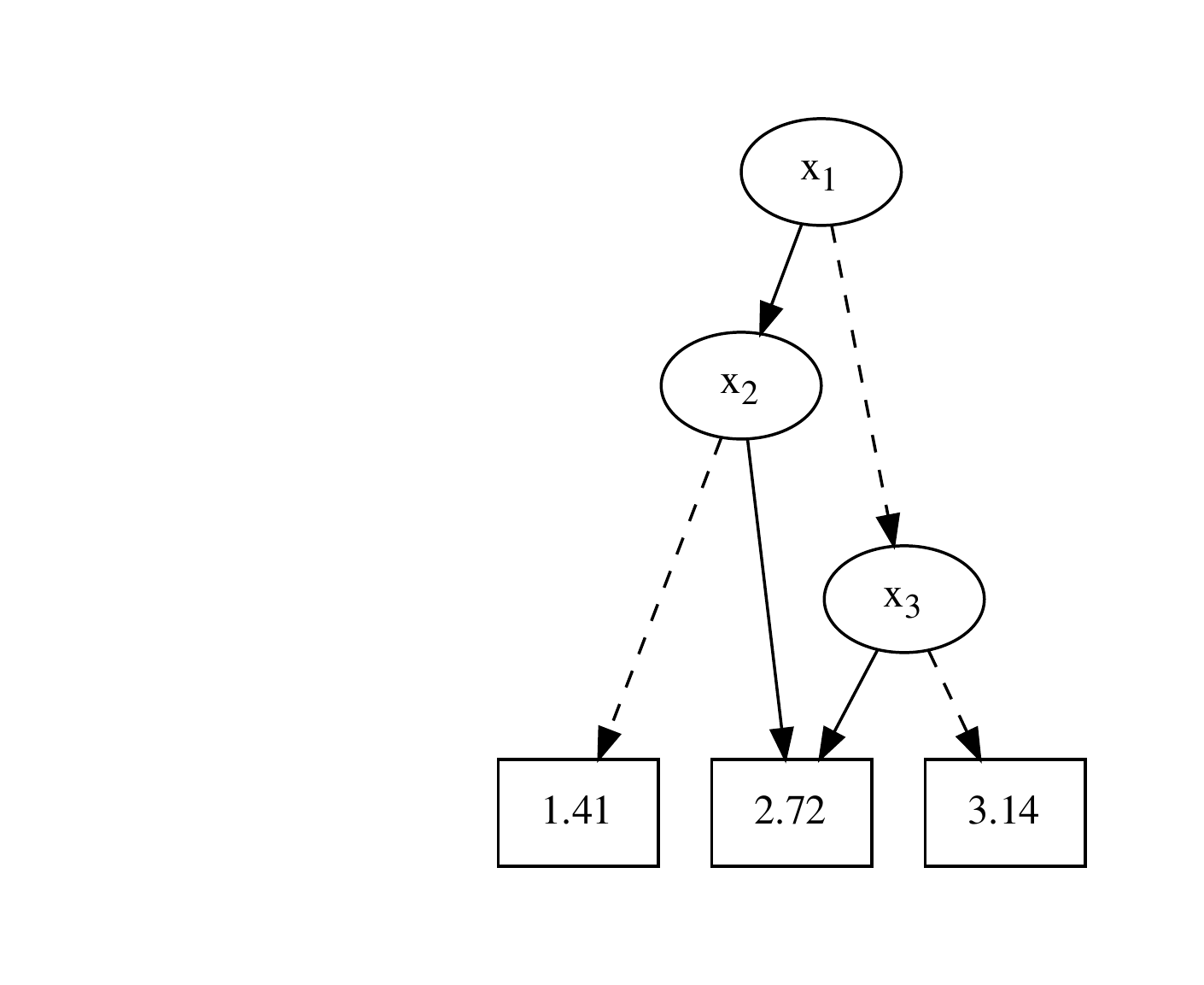}
  \caption{
    The directed graph $G$ of an ADD with variable set $X = \set{x_1, x_2, x_3}$, carrier set $S = \R$, and diagram variable order $\pi(x_i) = i$ for $i = 1, 2, 3$.
    If an edge from an oval node is solid (respectively dashed), then the corresponding Boolean variable is assigned 1 (respectively 0).
  }
\label{fig_add}
\end{figure}

\section{Proofs}
\label{sec_proofs}


\subsection{Proof of Theorem \ref{thm_early_proj}}

\begin{proof}
    For every $\tau \in 2^{(X \cup Y) \setminus \set{x}}$, we have:
    \begin{align*}
      \pars{\proj_x (A \mult B)} &(\tau)
       = (A \mult B)(\tau) + (A \mult B)(\tau \cup \set{x})
        \tag{Definition \ref{def_proj}} \\
      & = A(\tau \cap X) \mult B(\tau \cap Y)+ A((\tau \cup \set{x}) \cap X) \mult B((\tau \cup \set{x}) \cap Y)
        \tag{Definition \ref{def_mult}} \\
      & = A(\tau \cap X) \mult B(\tau \cap Y)+ A((\tau \cup \set{x}) \cap X) \mult B(\tau \cap Y)
        \tag{as $x \notin Y$} \\
      & = A(\tau \cap X) \mult B(\tau \cap Y)+ A(\tau \cap X \cup \set{x}) \mult B(\tau \cap Y)
        \tag{as $x \in X$} \\
      & = (A(\tau \cap X) + A(\tau \cap X \cup \set{x})) \mult B(\tau \cap Y) \\
      & = \pars{\proj_x A}(\tau \cap X) \mult B(\tau \cap Y)
        \tag{Definition \ref{def_proj}} \\
      & = \pars{\proj_x A}(\tau \cap (X \setminus \set{x})) \mult B(\tau \cap Y)
        \tag{as $x \notin \tau$} \\
      & = \pars{\pars{\proj_x A} \mult B)}(\tau)
        \tag{Definition \ref{def_mult}}
    \end{align*}
\qed
\end{proof}


\subsection{Proof of Theorem \ref{thm_valuation_wmc}}

In this section, for pseudo-Boolean functions $f : 2^X \to \R$, we define $\vars(f) \equiv X$.
Notice that a clause $c$ in a CNF formula can be interpreted as a Boolean function $c : 2^{\vars(c)} \to \B$.
Similarly, a set $\phi$ of clauses can be interpreted as the Boolean function $\phi = \prod_{c \in \phi} c$.

Let $X$ be a set of variables and $W = \prod_{x \in X} W_x$ be a literal-weight function.
Given a set $Y \subseteq X$, we define $W_{Y} \equiv \prod_{x \in Y} W_x$.
Notice $W_Y \mult W_Z = W_{Y \cup Z}$ for all sets $Y, Z \subseteq X$.

Let $\phi$ be a CNF formula over a set $X$ of variables, $(T, r, \gamma, \sigma)$ be a project-join tree of $\phi$, and $n \in \V T$.
Denote by $S(n)$ the subtree rooted at $n$.
We define the set $\Phi(n)$ of clauses that correspond to the leaves of $S(n)$:
\begin{align*}
    \Phi(n) \equiv
    \begin{cases}
        \set{\gamma(n)} & \text{if } n \in \Lv T \\
        \bigcup_{o \in \C(n)} \Phi(o) & \text{otherwise}
    \end{cases}
\end{align*}
We also define the set $P(n)$ of all variables to project in the subtree $S(n)$:
\begin{align*}
    P(n) \equiv
    \begin{cases}
       \emptyset & \text{if } n \in \Lv{T} \\
        \pi(n) \cup \bigcup_{o \in \C(n)} P(o) & \text{otherwise}
    \end{cases}
\end{align*}

\begin{lemma}
\label{lemma_disjoint_P}
    In a project-join tree $(T, r, \gamma, \pi)$, let $n$ be an internal node with children $o \ne q$.
    Then $P(o) \cap \vars \pars{\Phi(q) \mult W_{P(q)}} = \emptyset$.
\end{lemma}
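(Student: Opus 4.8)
The plan is to split $\vars\pars{\Phi(q) \mult W_{P(q)}}$ into two pieces and show that $P(o)$ misses each of them. First I would record the identity $\vars\pars{\Phi(q) \mult W_{P(q)}} = \pars{\bigcup_{c \in \Phi(q)} \vars(c)} \cup P(q)$, which is immediate from Definition \ref{def_mult} together with the convention $\vars(f) \equiv X$ for $f : 2^X \to \R$ (recall that $\Phi(q)$ is read here as the Boolean function $\prod_{c \in \Phi(q)} c$, whose domain is indexed by $\bigcup_{c \in \Phi(q)} \vars(c)$, and that $W_{P(q)}$ has domain $2^{P(q)}$). So it suffices to prove $P(o) \cap P(q) = \emptyset$ and $P(o) \cap \vars(c) = \emptyset$ for every $c \in \Phi(q)$.

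The key structural fact I would isolate first: since $o$ and $q$ are distinct children of the same node $n$ in the tree $T$, the subtrees $S(o)$ and $S(q)$ share no nodes at all; in particular they have disjoint sets of internal nodes and disjoint sets of leaves. From this, the first disjointness follows quickly: unfolding the recursive definition of $P$, the set $P(o)$ is exactly the union of $\pi(n')$ over internal nodes $n'$ of $S(o)$, and likewise for $P(q)$ over $S(q)$. Since those index sets of internal nodes are disjoint and Property \ref{prop1} of Definition \ref{def_jointree} says the sets $\set{\pi(n') : n' \in \V{T} \setminus \Lv{T}}$ are pairwise disjoint, we get $P(o) \cap P(q) = \emptyset$.

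For the second disjointness I would argue by contradiction. Suppose $x \in P(o) \cap \vars(c)$ for some $c \in \Phi(q)$. Then $x \in \pi(n')$ for some internal node $n'$ of $S(o)$ (by the description of $P(o)$ above), and $x$ appears in the clause $c$ whose associated leaf $\gamma^{-1}(c)$ lies in $S(q)$ (by definition of $\Phi(q)$). Property \ref{prop2} of Definition \ref{def_jointree}, applied to the internal node $n'$, the variable $x \in \pi(n')$, and the clause $c$, forces $\gamma^{-1}(c)$ to be a descendant of $n'$, hence a node of $S(o)$; but it is also a node of $S(q)$, contradicting the disjointness of $S(o)$ and $S(q)$. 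Combining the two pieces with the opening identity gives the lemma.

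I expect the only real care point — rather than a genuine obstacle — to be stating cleanly that distinct children of a common node have non-overlapping subtrees, and then lining up the quantifiers of Property \ref{prop2} precisely with the node $n'$, variable $x$, and clause $c$ at hand; once those are in place the argument is purely combinatorial and needs no computation with pseudo-Boolean functions beyond the bookkeeping identity for $\vars$ of a product.
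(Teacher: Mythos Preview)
Your proposal is correct and follows essentially the same approach as the paper: decompose $\vars\pars{\Phi(q) \mult W_{P(q)}}$ into $\vars(\Phi(q)) \cup P(q)$, then use Property~\ref{prop2} together with the disjointness of the subtrees $S(o)$ and $S(q)$ to rule out any $x \in P(o)$ from $\vars(\Phi(q))$. The only cosmetic difference is that you handle $P(o) \cap P(q) = \emptyset$ via Property~\ref{prop1} (pairwise disjointness of the $\pi$-labels), whereas the paper instead observes $P(q) \subseteq \vars(\Phi(q))$ (itself a consequence of Property~\ref{prop2}) so that this case is subsumed by the first.
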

\begin{proof}
    Let variable $x \in P(o)$.
    Notice that $x \in \pi(s)$ for some internal node $s$ that is a descendant of $o$.
    Assume there is an arbitrary clause $c \in \phi$ \st{} $x$ appears in $c$.
    By the last property in Definition \ref{def_jointree}, the corresponding leaf $\gamma^{-1}(c)$ is a descendant of $s$ and thus a descendant of $o$.
    So $x$ appears in no descendant leaf of $q$ (as $q$ is a sibling of $o$ in the tree $T$).
    Thus $x \notin \vars(\Phi(q))$.

    Now, note that $P(q) \subseteq \vars(\Phi(q))$, again by Definition \ref{def_jointree}.
    So $x \notin P(q)$.
    Therefore $x \notin \vars(\Phi(q) \mult W_{P(q)}) = \vars(\Phi(q)) \cup \vars(W_{P(q)}) = \vars(\Phi(q)) \cup P(q)$.
    Since $x \in P(o)$ is arbitrary, we have $P(o) \cap \vars \pars{\Phi(q) \mult W_{P(q)}} = \emptyset$.
\qed
\end{proof}

\begin{lemma}
\label{lemma_late_proj}
    In a project-join tree $(T, r, \gamma, \pi)$, let $n$ be an internal node with children $o \ne q$.
    Then:
    $$\proj_{P(o)} \pars{ \Phi(o) \mult W_{P(o)} } \mult \proj_{P(q)} \pars{ \Phi(q) \mult W_{P(q)} } = \proj_{P(o) \cup P(q)} \pars{ \Phi(o) \Phi(q) \mult W_{P(o) \cup P(q)} }$$
\end{lemma}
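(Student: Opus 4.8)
The plan is to prove this identity by starting from the right-hand side and rewriting it into the left-hand side using early projection (Theorem~\ref{thm_early_proj}) and Lemma~\ref{lemma_disjoint_P}. Two facts will be needed at the outset. First, since $\pi$ partitions $X$ by property~\ref{prop1} of Definition~\ref{def_jointree} and since $o$ and $q$ root disjoint subtrees of $T$ (they are distinct children of $n$), the sets $P(o)$ and $P(q)$ are unions of $\pi$-values over disjoint sets of internal nodes, hence $P(o) \cap P(q) = \emptyset$. Second, $W_{P(o) \cup P(q)} = W_{P(o)} \mult W_{P(q)}$, which is immediate from $W_Y \equiv \prod_{x \in Y} W_x$ together with the disjointness just noted.

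First I would rewrite the right-hand side. By associativity and commutativity of $\mult$ and the factorization of $W$, $\Phi(o)\Phi(q) \mult W_{P(o) \cup P(q)} = \pars{\Phi(o) \mult W_{P(o)}} \mult \pars{\Phi(q) \mult W_{P(q)}}$. Because $P(o) \cup P(q)$ is a disjoint union and projection is commutative, $\proj_{P(o) \cup P(q)}$ splits as $\proj_{P(o)} \proj_{P(q)}$ with no variable projected twice. So the right-hand side equals
\[
  \proj_{P(o)} \proj_{P(q)} \pars{ \pars{\Phi(o) \mult W_{P(o)}} \mult \pars{\Phi(q) \mult W_{P(q)}} }.
\]

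Next I would push the two projections inward. By Lemma~\ref{lemma_disjoint_P}, every variable of $P(o)$ lies outside $\vars\pars{\Phi(q) \mult W_{P(q)}}$, so applying Theorem~\ref{thm_early_proj} once per variable of $P(o)$ gives $\proj_{P(o)}\pars{ \pars{\Phi(o) \mult W_{P(o)}} \mult \pars{\Phi(q) \mult W_{P(q)}} } = \pars{\proj_{P(o)}\pars{\Phi(o) \mult W_{P(o)}}} \mult \pars{\Phi(q) \mult W_{P(q)}}$. Then the same step with the roles of $o$ and $q$ exchanged handles $\proj_{P(q)}$: Lemma~\ref{lemma_disjoint_P} (applied with $o$ and $q$ swapped) yields $P(q) \cap \vars\pars{\Phi(o) \mult W_{P(o)}} = \emptyset$, and since $\vars\pars{\proj_{P(o)}\pars{\Phi(o) \mult W_{P(o)}}} \subseteq \vars\pars{\Phi(o) \mult W_{P(o)}}$, Theorem~\ref{thm_early_proj} again moves $\proj_{P(q)}$ onto the factor $\Phi(q) \mult W_{P(q)}$. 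The resulting expression is exactly the left-hand side.

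The main obstacle I anticipate is the bookkeeping in the iterated use of Theorem~\ref{thm_early_proj}, which is stated for a single variable. I would phrase the two ``push inward'' steps as short inductions on $\size{P(o)}$ and $\size{P(q)}$ respectively, using the fact that $\vars(\proj_x g) \subseteq \vars(g)$ so that the disjointness hypothesis furnished by Lemma~\ref{lemma_disjoint_P} remains valid at every stage of the peeling. Apart from that, the argument is a routine chain of rewrites via associativity, commutativity of $\mult$, and commutativity of projection.
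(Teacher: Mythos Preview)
Your proposal is correct and is essentially the paper's own proof run in the opposite direction: the paper starts from the left-hand side and ``undoes'' early projection twice (via Lemma~\ref{lemma_disjoint_P} and Theorem~\ref{thm_early_proj}) to reach the right-hand side, while you start from the right-hand side and push the two projections inward. The key ingredients---disjointness of $P(o)$ and $P(q)$, the factorization $W_{P(o)\cup P(q)} = W_{P(o)}\mult W_{P(q)}$, and two invocations of early projection justified by Lemma~\ref{lemma_disjoint_P}---are identical in both arguments.
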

\begin{proof}
    We have:
    \begin{align*}
        \proj_{P(o)} \pars{ \Phi(o) \mult W_{P(o)} } \mult \proj_{P(q)} \pars{ \Phi(q) \mult W_{P(q)} }
        & = \proj_{P(o)} \pars{ \Phi(o) \mult W_{P(o)} \mult \proj_{P(q)} \pars{ \Phi(q) \mult W_{P(q)} } } \tag{undoing early projection of $P(o)$, observing Lemma \ref{lemma_disjoint_P}} \\
        & = \proj_{P(q)} \proj_{P(o)} \pars{ \Phi(o) \mult W_{P(o)} \mult \Phi(q) \mult W_{P(q)} } \tag{undoing early projection of $P(q)$, observing Lemma \ref{lemma_disjoint_P}} \\
        & = \proj_{P(o)} \proj_{P(q)} \pars{ \Phi(o) \Phi(q) \mult W_{P(o)} W_{P(q)} }
        \\
        & = \proj_{P(o) \cup P(q)} \pars{ \Phi(o) \Phi(q) \mult W_{P(o) \cup P(q)} }
    \end{align*}
\qed
\end{proof}

\pagebreak

\begin{lemma}
\label{lemma_valuation}
    Let $\phi$ be a CNF formula over a set $X$ of variables, $W$ be a literal-weight function over $X$, and $(T, r, \gamma, \pi)$ be a project-join tree of $\phi$.
    Then for every node $n$ in $T$:
    $$f^W_n = \proj_{P(n)} \pars{ \Phi(n) \mult W_{P(n)} }$$
\end{lemma}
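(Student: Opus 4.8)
\textbf{Proof plan for Lemma \ref{lemma_valuation}.}
The plan is to prove the statement by structural induction on the node $n$ in the project-join tree $(T, r, \gamma, \pi)$. This matches the recursive definition of $f^W_n$ in Equation \eqref{eq_valuation}, as well as the recursive definitions of $\Phi(n)$ and $P(n)$, so an inductive argument is the natural choice. The base case is $n \in \Lv{T}$; the inductive step handles internal nodes, assuming the statement for all children.

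For the base case, suppose $n$ is a leaf. Then $f^W_n = \gamma(n)$ by Equation \eqref{eq_valuation}, while $P(n) = \emptyset$ and $\Phi(n) = \set{\gamma(n)}$ by definition, so $\proj_{P(n)}\pars{\Phi(n) \mult W_{P(n)}} = \proj_\emptyset\pars{\gamma(n) \mult W_\emptyset} = \gamma(n)$, using the convention $\proj_\emptyset f \equiv f$ and $W_\emptyset = 1$ (the empty product). This is immediate. For the inductive step, let $n$ be an internal node with children $\C(n)$, and assume $f^W_o = \proj_{P(o)}\pars{\Phi(o) \mult W_{P(o)}}$ for each $o \in \C(n)$. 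I would first expand $f^W_n$ using Equation \eqref{eq_valuation}:
\[
    f^W_n = \sum_{\pi(n)} \pars{ \prod_{o \in \C(n)} f^W_o \cdot \prod_{x \in \pi(n)} W_x } = \proj_{\pi(n)} \pars{ \prod_{o \in \C(n)} \proj_{P(o)}\pars{\Phi(o) \mult W_{P(o)}} \cdot W_{\pi(n)} }.
\]
The core of the argument is then to show that the product $\prod_{o \in \C(n)} \proj_{P(o)}\pars{\Phi(o) \mult W_{P(o)}}$ collapses to $\proj_{\bigcup_{o} P(o)} \pars{ \prod_{o} \Phi(o) \mult W_{\bigcup_{o} P(o)} }$. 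This is exactly the content of Lemma \ref{lemma_late_proj}, which handles two children; to extend it to an arbitrary number of children I would iterate Lemma \ref{lemma_late_proj}, grouping children one at a time, and the associativity/commutativity of product together with commutativity of projection makes the order of grouping irrelevant. Once the product is collapsed, I substitute back, use $W_{\pi(n)} \mult W_{\bigcup_o P(o)} = W_{\pi(n) \cup \bigcup_o P(o)} = W_{P(n)}$ (since $P(n) = \pi(n) \cup \bigcup_{o \in \C(n)} P(o)$ by definition), and note $\Phi(n) = \prod_{o \in \C(n)} \Phi(o)$ by definition, together with $\proj_{\pi(n)} \proj_{\bigcup_o P(o)} = \proj_{\pi(n) \cup \bigcup_o P(o)} = \proj_{P(n)}$ (valid because the $\pi$-labels partition $X$, so $\pi(n)$ is disjoint from each $P(o)$, hence these projections are over disjoint variable sets). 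This yields $f^W_n = \proj_{P(n)}\pars{\Phi(n) \mult W_{P(n)}}$, completing the induction.

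The main obstacle is the iterated application of Lemma \ref{lemma_late_proj} to more than two children: I need to verify that the hypothesis of that lemma (the disjointness condition from Lemma \ref{lemma_disjoint_P}, i.e. $P(o) \cap \vars\pars{\Phi(q) \mult W_{P(q)}} = \emptyset$) continues to hold when one of the two ``children'' is actually a partial product $\prod_{o \in C'} \Phi(o) \mult W_{\bigcup_{o \in C'} P(o)}$ over a subset $C' \subsetneq \C(n)$. I expect this to go through because the variables appearing in such a partial product are a subset of $\bigcup_{o \in C'}\pars{\vars(\Phi(o)) \cup P(o)}$, and Lemma \ref{lemma_disjoint_P} applied pairwise shows $P(q)$ is disjoint from each $\vars(\Phi(o)) \cup P(o)$ for $o \neq q$; but writing this carefully — possibly by stating and proving a small auxiliary claim that generalizes Lemma \ref{lemma_disjoint_P} to ``one child versus the union of the others'' — is where the real work lies. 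A secondary point worth being careful about is the convention when $\C(n)$ is small or some $P(o) = \emptyset$ (leaf children), but these are handled uniformly by the $\proj_\emptyset f \equiv f$ convention and cause no difficulty.
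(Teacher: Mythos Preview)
Your proposal is correct and follows essentially the same structural-induction approach as the paper: the base case is identical, and in the inductive step the paper likewise substitutes the hypothesis, applies Lemma \ref{lemma_late_proj} repeatedly to collapse the product over children, then pulls $W_{\pi(n)}$ inside the inner projection via the disjointness $\pi(n) \cap \bigcup_{o} P(o) = \emptyset$ before combining everything into $\proj_{P(n)}(\Phi(n)\cdot W_{P(n)})$. Your careful discussion of why the iterated use of Lemma \ref{lemma_late_proj} is legitimate (generalizing Lemma \ref{lemma_disjoint_P} to one child versus a union of others) is in fact more explicit than the paper, which simply writes ``applying Lemma \ref{lemma_late_proj} multiple times'' without justifying that the pairwise hypothesis continues to hold for partial products.
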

\begin{proof}
    Notice that both pseudo-Boolean functions have the same variables in their domains:
    \begin{align*}
        \vars \pars{f^W_n}
        & = \vars(n) \\
        & = \vars(\Phi(n)) \setminus P(n) \\
        & = \vars \pars{ \proj_{P(n)} \pars{ \Phi(n) \mult W_{P(n)} } }
    \end{align*}

    We employ structural induction on $n \in \V T$.
    In the base case, $n$ is a leaf.
    So $P(n) = \emptyset$ and $\Phi(n) = \set{\gamma(n)}$.
    We have:
    \begin{align*}
        f^W_n
        & = \gamma(n) \tag{Equation \eqref{eq_valuation}} \\
        & = \prod_{c \in \Phi(n)} c \tag{singleton set} \\
        & = \Phi(n)
        \\
        & = \Phi(n) \mult \prod_{x \in \emptyset} W_x \tag{an empty product is equal to the number $1$} \\
        & = \Phi(n) \mult W_\emptyset \\
        & = \Phi(n) \mult W_{P(n)} \\
        & = \proj_{\emptyset} \pars{ \Phi(n) \mult W_{P(n)} } \tag{convention on projection} \\
        & = \proj_{P(n)} \pars{ \Phi(n) \mult W_{P(n)} }
    \end{align*}

    In the inductive case, $n$ is an internal node of $T$.
    Our induction hypothesis is that
    $$f^W_o = \proj_{P(o)} \pars{ \Phi(o) \mult W_{P(o)} }$$
    for every child node $o$ of $n$.
    Then we have:
    \begin{align*}
        f^W_n
        & = \proj_{\pi(n)} \pars{\prod_{o \in \C(n)} f^W_o \mult \prod_{x \in \pi(n)} W_x} \tag{Equation \eqref{eq_valuation}} \\
        & = \proj_{\pi(n)} \pars{\prod_{o \in \C(n)} f^W_o \mult W_{\pi(n)}} \\
        & = \proj_{\pi(n)} \pars{\prod_{o \in \C(n)} \pars{\proj_{P(o)} \pars{ \Phi(o) \mult W_{P(o)} } } \mult W_{\pi(n)}} \tag{induction hypothesis} \\
        & = \proj_{\pi(n)} \pars{ \proj_{\bigcup_{s \in \C(n)} P(s)} \pars{ \prod_{o \in \C(n)} \Phi(o) \mult W_{\bigcup_{t \in \C(n)} P(t)} } \mult W_{\pi(n)} } \tag{applying Lemma \ref{lemma_late_proj} multiple times} \\
        & = \proj_{\pi(n)} \pars{ \proj_{\bigcup_{s \in \C(n)} P(s)} \pars{ \prod_{o \in \C(n)} \Phi(o) \mult W_{\bigcup_{t \in \C(n)} P(t)} \mult W_{\pi(n)}} } \tag{undoing early projection, observing $\pi(n) \cap \bigcup_{s \in \C(n)} P(s) = \emptyset$} \\
        & = \proj_{\pi(n) \cup \bigcup_{s \in \C(n)} P(s)} \pars{ \prod_{o \in \C(n)} \Phi(o) \mult W_{\bigcup_{t \in \C(n)} P(t) \cup \pi(n)} } \\
        & = \proj_{P(n)} \pars{ \prod_{o \in \C(n)} \Phi(o) \mult W_{P(n)} } \tag{definition of $P(n)$} \\
        & = \proj_{P(n)} \pars{ \Phi(n) \mult W_{P(n)} } \tag{as $\Phi(n) = \bigcup_{o \in \C(n)} \Phi(o)$ is a set of clauses}
    \end{align*}
\qed
\end{proof}

Now, we can prove Theorem \ref{thm_valuation_wmc}.
\begin{proof}
    As $r$ is the root of the project-join tree, $P(r) = X$ and $\Phi(r) = \phi$.
    Then:
    \begin{align*}
        f^W_r(\emptyset)
        & = \pars{ \proj_{P(r)} \pars{ \Phi(r) \mult W_{P(r)} } } (\emptyset) \tag{Lemma \ref{lemma_valuation}} \\
        & = \pars{ \proj_{X} \pars{ \phi \mult W_{X} } } (\emptyset) \\
        & = \pars{ \proj_{X} \pars{ \phi \mult W } } (\emptyset) \\
        & = W(\phi)
    \end{align*}
\qed
\end{proof}


\subsection{Proof of Theorem \ref{thm_csp_jt}}

In this section, we assume the antecedents of Theorem \ref{thm_csp_jt} regarding the functions $\clusterVarOrder$, $\clauseRank$, and $\chosenCluster$.
Notice that for each $i = 1, 2, \ldots, m$ in Algorithm \ref{alg_csp_jt}, we have the following:
\begin{itemize}
    \item $\Gamma_i$ is a set of clauses
    \item $\kappa_i$ is a set of nodes that includes leaves $l$ \st{} $\gamma(l) \in \Gamma_i$
    \item $n_i$ is an internal node
    \item $n_i$'s children include the leaves in $\kappa_i$
    \item $\pi(n_i) = X_i$
\end{itemize}

We show that the first property in Definition \ref{def_jointree} holds:
\begin{lemma}[Property \ref{prop1}]
\label{lemma_prop1}
    The set $\set{\pi(n) : n \in \V T \setminus \Lv T}$ is a partition of $X$.
\end{lemma}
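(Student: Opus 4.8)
The plan is to show that $\{\pi(n) : n \in \V T \setminus \Lv T\}$ is a partition of $X$ by establishing two facts: first, that the sets $X_i$ (which are exactly the values $\pi(n_i)$ assigned on Line \ref{line_internal_node}) cover $X$, and second, that they are pairwise disjoint. Recall that in the first loop of Algorithm \ref{alg_csp_jt}, we set $X_i \gets \vars(\Gamma_i) \setminus \bigcup_{j=i+1}^m \vars(\Gamma_j)$, and that by Condition \ref{cond1} the sets $\{\Gamma_i\}_{i=1}^m$ form a partition of the clauses of $\phi$.

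First I would prove disjointness. Take $i < k$ and suppose $x \in X_i \cap X_k$. Since $x \in X_i = \vars(\Gamma_i) \setminus \bigcup_{j = i+1}^m \vars(\Gamma_j)$ and $k > i$, we get $x \notin \vars(\Gamma_k)$. But $x \in X_k \subseteq \vars(\Gamma_k)$, a contradiction. Hence the $X_i$ are pairwise disjoint. Note this argument also needs the observation that the internal nodes $n_i$ are created for distinct indices $i$ (one per iteration of the second loop where $\kappa_i \ne \emptyset$), so distinct internal nodes carry labels $X_i$ with distinct indices.

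Next I would prove coverage, i.e. $\bigcup_i X_i = X$, where the union ranges over those $i$ with $\kappa_i \ne \emptyset$ (the indices for which an internal node is actually constructed). One direction, $\bigcup_i X_i \subseteq X$, is immediate since each $X_i \subseteq \vars(\Gamma_i) \subseteq \vars(\phi) = X$. For the reverse, take $x \in X$. Then $x$ appears in some clause $c \in \phi$, so $c \in \Gamma_j$ for some $j$ by the partition property. Let $i$ be the \emph{largest} index such that $x \in \vars(\Gamma_i)$; such an $i$ exists since at least one $\Gamma_j$ contains a clause mentioning $x$. By maximality, $x \notin \vars(\Gamma_j)$ for all $j > i$, so $x \in \vars(\Gamma_i) \setminus \bigcup_{j=i+1}^m \vars(\Gamma_j) = X_i$. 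It remains to check that $\kappa_i \ne \emptyset$ so that the node $n_i$ is actually created: since $x \in \vars(\Gamma_i)$, the set $\Gamma_i$ is nonempty, hence the corresponding set of leaves $\kappa_i = \{\leaf(T,c) : c \in \Gamma_i\}$ is nonempty (before possibly being augmented by internal nodes on Line \ref{line_cluster_union}). Therefore $x \in X_i = \pi(n_i)$, and coverage follows.

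The main obstacle I anticipate is a bookkeeping subtlety rather than a deep one: one must be careful that the labels $\pi(n)$ over internal nodes of the final tree $T$ are \emph{exactly} the sets $\{X_i : \kappa_i \ne \emptyset\}$, neither more nor fewer. This requires checking that every internal node of $T$ arises as some $n_i$ from Line \ref{line_internal_node} (no internal nodes are created elsewhere), that $\pi(n_i) = X_i$ by construction, and that when $\kappa_i = \emptyset$ no node $n_i$ is created and $X_i$ contributes nothing — but also that in that case $X_i = \emptyset$, so no variable is lost. This last point follows because if $\kappa_i = \emptyset$ then $\Gamma_i = \emptyset$, hence $\vars(\Gamma_i) = \emptyset$, hence $X_i = \emptyset$. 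Once this correspondence is pinned down, the partition claim is exactly the conjunction of the disjointness and coverage arguments above.
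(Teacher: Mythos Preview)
Your proposal is correct and follows essentially the same approach as the paper: identify the internal nodes with the $n_i$ produced on Line~\ref{line_internal_node}, so that $\pi(n_i) = X_i$, and use that $\{X_i\}_{i=1}^m$ partitions $X$. In fact you are more careful than the paper, which simply asserts (via the comment in Algorithm~\ref{alg_csp_jt}) that $\{X_i\}$ is a partition and glosses over the $\kappa_i = \emptyset$ case; your explicit disjointness/coverage arguments and the observation that $\kappa_i = \emptyset \Rightarrow \Gamma_i = \emptyset \Rightarrow X_i = \emptyset$ fill in exactly the details the paper omits.
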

\begin{proof}
    For each $i = 1, 2, \ldots, m$, Algorithm \ref{alg_csp_jt} constructs an internal nodes $n_i$ with $\pi(n_i) = X_i$.
    Recall that $\set{X_i}_{i = 1}^m$ is a partition of $X$.
    Then $\set{\pi(n_i)}_{i = 1}^m$ is the same partition of $X$.
\qed
\end{proof}

We show that the second property in Definition \ref{def_jointree} holds through the following lemmas.

\begin{lemma}
\label{lemma_unprojected}
    Let $1 \le p < q \le m$.
    Assume some $x \in \vars(\Gamma_p) \cap X_q$.
    Then $x \in \vars(n_p)$.
\end{lemma}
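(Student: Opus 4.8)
\textbf{Proof proposal for Lemma~\ref{lemma_unprojected}.}
The plan is to trace the variable $x$ through the two loops of Algorithm~\ref{alg_csp_jt}. First I would observe that $x \in \vars(\Gamma_p) = \bigcup_{c \in \Gamma_p} \vars(c)$ means there is a clause $c \in \Gamma_p$ with $x \in \vars(c)$. In the first loop, iteration $p$ constructs a leaf $l = \leaf(T, c)$ and puts it in $\kappa_p$. Since $\kappa_p$ is only ever enlarged afterwards (in iterations $i < p$ of the second loop, when $\chosenCluster(n_i) = p$), the leaf $l$ is still a child of $n_p$ when $n_p$ is constructed on Line~\ref{line_internal_node}; that is, $l \in \C(n_p)$. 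As $\vars(l) = \vars(\gamma(l)) = \vars(c) \ni x$, this gives $x \in \bigcup_{o \in \C(n_p)} \vars(o)$. Here I would also note that $n_p$ is well-defined, because $\Gamma_p \ne \emptyset$ forces $\kappa_p \ne \emptyset$.

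Next I would unfold the definition $\vars(n_p) = \pars{\bigcup_{o \in \C(n_p)} \vars(o)} \setminus \pi(n_p)$, together with $\pi(n_p) = X_p$. By the previous paragraph, to conclude $x \in \vars(n_p)$ it remains only to show $x \notin X_p$.

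For this last step I would use that the sets $X_1, \ldots, X_m$ are pairwise disjoint (recall $\set{X_i}_{i=1}^m$ is a partition of $X$; this is immediate from the definition: for $p < q$ we have $X_q \subseteq \vars(\Gamma_q) \subseteq \bigcup_{j = p+1}^m \vars(\Gamma_j)$, whereas $X_p = \vars(\Gamma_p) \setminus \bigcup_{j = p+1}^m \vars(\Gamma_j)$ is disjoint from that union). Since $x \in X_q$ with $q > p$, it follows that $x \notin X_p$, and hence $x \in \vars(n_p)$.

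The argument is short; the only point requiring care is the bookkeeping in the first paragraph—confirming that the leaf associated with $c$ remains a child of $n_p$ despite later additions to $\kappa_p$, and that $n_p$ is in fact created. Beyond this there is no genuine obstacle.
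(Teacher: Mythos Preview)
Your proof is correct and follows essentially the same idea as the paper's: the variable $x$ enters the subtree at $n_p$ via some clause in $\Gamma_p$, and since $x \in X_q$ with $q > p$, it is not projected at or below $n_p$, so it survives into $\vars(n_p)$. Your version is in fact more explicit than the paper's rather terse argument: by observing that the leaf $l = \gamma^{-1}(c)$ is a \emph{direct} child of $n_p$, you only need $x \notin X_p = \pi(n_p)$, whereas the paper phrases it as ``$x$ is projected in neither $n_p$ nor a descendant of $n_p$,'' implicitly reasoning about the whole subtree. Your shortcut is legitimate and slightly cleaner; the extra bookkeeping you flag (that $\kappa_p$ is only enlarged between the two loops, and that $\Gamma_p \ne \emptyset$ ensures $n_p$ is created) is exactly what is needed to make the direct-child claim precise.
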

\begin{proof}
    Notice $x \in X_q = \pi(n_q)$.
    Then $x$ is projected in $n_q$.
    Since $p < q$, we know $x$ is projected in neither $n_p$ nor a descendants of $n_p$.
    Since $x \in \vars(\Gamma_p)$, we know $x$ remains in $n_p$.
\qed
\end{proof}

\begin{lemma}
\label{lemma_internal_descedant}
    Let $1 \le p_0 < q \le m$.
    Assume $\vars(\Gamma_{p_0}) \cap X_q \ne \emptyset$.
    Then the internal node $n_{p_0}$ is a descendant of the node $n_q$.
\end{lemma}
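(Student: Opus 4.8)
\textbf{Proof plan for Lemma \ref{lemma_internal_descedant}.}

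The plan is to prove this by induction on $q - p_0$, strengthening the statement along the way so the induction goes through. The key observation is the mechanism in Lines \ref{line_chosen_cluster}--\ref{line_cluster_union} of Algorithm \ref{alg_csp_jt}: when internal node $n_i$ is created, it is placed into cluster $\kappa_{\chosenCluster(n_i)}$, and by Condition \ref{cond3} the value $j = \chosenCluster(n_i)$ is the first index strictly greater than $i$ for which $X_j \cap \vars(n_i) \neq \emptyset$. So the parent of $n_{p_0}$ in $T$ is $n_j$, where $j$ is exactly this ``next relevant cluster'' index. I would first establish a helper claim: if $\vars(n_{p_0}) \cap X_q \neq \emptyset$ for some $q > p_0$, then $n_{p_0}$ has a parent $n_j$ with $p_0 < j \le q$, and moreover $\vars(n_j) \supseteq \vars(n_{p_0}) \setminus (X_{p_0+1} \cup \cdots \cup X_j)$, so that in particular $\vars(n_j) \cap X_q \neq \emptyset$ whenever $j < q$. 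This follows because $n_j$'s valuation includes $n_{p_0}$'s valuation as a factor, minus only the variables in $\pi(n_j) = X_j$, and $X_q \cap X_j = \emptyset$ since the $X_i$ partition $X$.

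Given the helper claim, the main argument runs as follows. By Lemma \ref{lemma_unprojected} and the hypothesis $\vars(\Gamma_{p_0}) \cap X_q \neq \emptyset$, we get $x \in \vars(n_{p_0}) \cap X_q$ for some $x$, so $n_{p_0}$ is genuinely an internal node that gets a parent (the condition $i < m$ on Line \ref{line_chosen_cluster} is satisfied since $p_0 \le$ the index of a cluster containing $x$, which is $q \le m$; more carefully, $\vars(n_{p_0}) \neq \emptyset$ so $\chosenCluster$ returns a valid index). Let $n_j$ be the parent of $n_{p_0}$. If $j = q$ we are done. If $j < q$, then by the helper claim $\vars(n_j) \cap X_q \neq \emptyset$, and since $j > p_0$ we have $q - j < q - p_0$, so by the induction hypothesis $n_j$ is a descendant of $n_q$; since $n_{p_0}$ is a child of $n_j$, it too is a descendant of $n_q$. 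The case $j > q$ cannot occur: by Condition \ref{cond3}, $\chosenCluster(n_{p_0})$ does not skip any cluster $\kappa_s$ with $X_s \cap \vars(n_{p_0}) \neq \emptyset$, and $X_q \cap \vars(n_{p_0}) \ni x$, so $j \le q$. The base case $q - p_0 = 1$, i.e. $q = p_0 + 1$, forces $j = p_0 + 1 = q$ directly.

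The main obstacle I anticipate is bookkeeping around the iterative refinement of which variables survive in $\vars(n_i)$ as we climb from $n_{p_0}$ toward its eventual ancestors, i.e. making the helper claim precise enough to feed the induction. Specifically, one must track that a variable $x \in X_q$ appearing in $n_{p_0}$ cannot be projected out by any intermediate internal node $n_{p_0+1}, \ldots, n_{j}$ on the path (since those project out $X_{p_0+1}, \ldots, X_j$ respectively, all disjoint from $X_q$), and hence $x$ remains present at each step, keeping the hypothesis of the claim alive. Once that invariant is stated cleanly, the descent through $\chosenCluster$ indices together with Conditions \ref{cond2} and \ref{cond3} makes everything fall into place. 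This lemma, combined with Lemma \ref{lemma_prop1}, will then yield Property \ref{prop2} of Definition \ref{def_jointree} and thus the correctness of Algorithm \ref{alg_csp_jt}.
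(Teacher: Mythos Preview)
Your proposal is correct and is essentially the same argument as the paper's, just packaged as induction on $q - p_0$ instead of the paper's direct sequence-walking by contradiction. The paper lets $n_{p_0}, n_{p_1}, n_{p_2}, \ldots$ be the chain of ancestors, observes that a fixed $x \in \vars(\Gamma_{p_0}) \cap X_q$ satisfies $x \in \vars(n_{p_i})$ for every $i$ (since $x \notin X_{p_i} = \pi(n_{p_i})$ whenever $p_i \neq q$), and then uses Condition~\ref{cond3} to get $p_i < p_{i+1} \le q$, forcing the strictly increasing bounded sequence to hit $q$. Your induction step unrolls exactly one link of this chain at a time, with the ``helper claim'' recording the same invariant that $x$ survives into the parent. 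One small simplification: you do not need the stronger containment $\vars(n_j) \supseteq \vars(n_{p_0}) \setminus (X_{p_0+1} \cup \cdots \cup X_j)$; since $n_j$ is the \emph{immediate} parent of $n_{p_0}$, only $X_j$ is projected there, and $x \notin X_j$ (because $x \in X_q$ and the $X_i$ partition $X$) already gives $x \in \vars(n_j)$.
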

\begin{proof}
    Let $n_{p_1}, n_{p_2}, \ldots$ be the parent, grandparent,\ldots{} of $n_{p_0}$.
    By way of contradiction, assume every $p_i \ne q$.
    Let $x$ be a variable in $\vars(\Gamma_{p_0}) \cap X_q \ne \emptyset$.
    By Lemma \ref{lemma_unprojected}, we know $x \in \vars(n_{p_0})$.
    Notice that for all $i = 0, 1, 2, \ldots$, we have:
    \begin{itemize}
        \item $x \notin X_{p_i}$, as $x \in X_q$ already
        \item $x \in \vars(n_{p_i})$, as $x$ remains from $n_{p_0}$ without being projected according to $\pi(n_{p_i}) = X_{p_i}$
        \item $p_i < p_{i + 1} = \chosenCluster(n_{p_i}) \le q$ by Condition \ref{cond3} of Theorem \ref{thm_csp_jt}, as $x \in \pi(n_q) \cap \vars(n_{p_{i+1}}) = X_q \cap \vars(n_{p_{i+1}}) \ne \emptyset$
    \end{itemize}
    So the strictly increasing sequence $\seq{p_i}_i$ must contain $q$, contradiction.
\qed
\end{proof}

\begin{lemma}
\label{lemma_earlier_cluster}
    Let $1 \le p, q \le m$.
    Assume $\vars(\Gamma_p) \cap X_q \ne \emptyset$.
    Then $p \le q$.
\end{lemma}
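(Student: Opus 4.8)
The plan is to argue directly from the definition of the set $X_q$ as constructed in Algorithm~\ref{alg_csp_jt}, with no appeal to the earlier lemmas. Recall that the algorithm sets $X_q \gets \vars(\Gamma_q) \setminus \bigcup_{j = q+1}^m \vars(\Gamma_j)$, so by construction every variable belonging to $X_q$ lies outside $\vars(\Gamma_j)$ for each index $j$ with $q < j \le m$. This is exactly the ``lookahead'' property that makes $\{X_i\}_{i=1}^m$ a partition of $X$ aligned with the clause ranking.

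First I would fix a variable $x \in \vars(\Gamma_p) \cap X_q$, which exists by the hypothesis that this intersection is nonempty. From $x \in X_q$ and the displayed definition above, I get $x \notin \vars(\Gamma_j)$ for every $j \in \{q+1, \dots, m\}$. On the other hand, the choice of $x$ gives $x \in \vars(\Gamma_p)$, and by assumption $1 \le p \le m$. Combining these, $p$ cannot lie in $\{q+1, \dots, m\}$, and since $p$ is an index in $\{1, \dots, m\}$, this forces $p \le q$, which is the claim.

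Since this single unwinding of the definition settles the statement, there is essentially no obstacle here; the only point requiring care is keeping track of the index range, namely that $p$ runs over $\{1, \dots, m\}$ so that excluding the indices strictly greater than $q$ immediately yields $p \le q$. I expect this lemma is then used downstream together with Lemma~\ref{lemma_earlier_cluster} (and the preceding lemmas) to verify Property~\ref{prop2} of Definition~\ref{def_jointree} for the tree returned by Algorithm~\ref{alg_csp_jt}.
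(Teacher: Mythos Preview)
Your proof is correct and takes essentially the same approach as the paper: both unwind the definition $X_q = \vars(\Gamma_q) \setminus \bigcup_{j=q+1}^m \vars(\Gamma_j)$ to see that no variable of $X_q$ can appear in $\vars(\Gamma_p)$ for $p > q$; the paper phrases this as a one-line proof by contradiction, while you argue the contrapositive directly. (Minor slip in your commentary: you refer to ``Lemma~\ref{lemma_earlier_cluster}'' as a downstream tool, but that is the lemma you are proving.)
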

\begin{proof}
    To the contrary, assume $p > q$.
    Then by construction, $X_q \cap \vars(\Gamma_p) = \emptyset$, contradiction.
\qed
\end{proof}

\begin{lemma}[Property \ref{prop2}]
\label{lemma_prop2}
    Let $1 \le q \le m$ and variable $x \in \pi(n_q)$.
    Take an arbitrary clause $c \in \phi$ \st{} $x \in \vars(c)$.
    Then the leaf $\gamma^{-1}(c)$ is a descendant of $n_q$.
\end{lemma}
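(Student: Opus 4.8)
The goal is to show Property \ref{prop2} for the project-join tree constructed by Algorithm \ref{alg_csp_jt}: given an internal node $n_q$, a variable $x \in \pi(n_q) = X_q$, and a clause $c \in \phi$ containing $x$, the leaf $\gamma^{-1}(c)$ is a descendant of $n_q$. The natural approach is to trace where the leaf $\ell := \gamma^{-1}(c)$ starts in the algorithm and follow it up the tree, showing its chain of ancestors must pass through $n_q$. First I would let $p = \clauseRank(c, \rho)$, so that $c \in \Gamma_p$ and $\ell \in \kappa_p$; the leaf $\ell$ is therefore a child of the internal node $n_p$ (note $\kappa_p \neq \emptyset$ since it contains $\ell$, so $n_p$ is indeed constructed on Line \ref{line_internal_node}). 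Since $x \in \vars(c) \subseteq \vars(\Gamma_p)$ and $x \in X_q$, we have $x \in \vars(\Gamma_p) \cap X_q \neq \emptyset$, so Lemma \ref{lemma_earlier_cluster} gives $p \le q$.

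Next I would split into the two cases $p = q$ and $p < q$. If $p = q$, then $\ell$ is a child of $n_q$, hence trivially a descendant. If $p < q$, then by Lemma \ref{lemma_internal_descedant} the internal node $n_p$ is a descendant of $n_q$. Since $\ell$ is a child of $n_p$, transitivity of the descendant relation in the rooted tree $(T, n_m)$ gives that $\ell$ is a descendant of $n_q$. That completes the argument.

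The main subtlety — and the step I would be most careful about — is confirming that the leaf $\ell$ is genuinely a child of $n_p$ in the tree as constructed, i.e. that the first \texttt{for} loop puts $\ell$ into $\kappa_p$ and the second loop's call $\internal(T, \kappa_p, X_p)$ sets $\C(n_p) = \kappa_p \ni \ell$. This is immediate from the algorithm's structure but worth stating explicitly, since it is the hinge connecting the clause $c$ to the node $n_p$. The rest is bookkeeping: invoking the two already-proved lemmas (\ref{lemma_earlier_cluster} and \ref{lemma_internal_descedant}) and using transitivity. I do not expect any genuine obstacle here; the real work was done in establishing Lemmas \ref{lemma_unprojected}--\ref{lemma_earlier_cluster}, which in turn relied on Conditions \ref{cond1}--\ref{cond3} of Theorem \ref{thm_csp_jt}. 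Once Property \ref{prop2} is in hand, combining it with Lemma \ref{lemma_prop1} (Property \ref{prop1}) and the observations that $\gamma$ is a bijection $\Lv T \to \phi$ (from Condition \ref{cond1}, $\set{\Gamma_i}$ partitions $\phi$) and that $\pi$ is defined on exactly the internal nodes yields that $(T, n_m, \gamma, \pi)$ satisfies all clauses of Definition \ref{def_jointree}, proving Theorem \ref{thm_csp_jt}.
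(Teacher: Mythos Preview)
Your proposal is correct and follows essentially the same approach as the paper's proof: identify $p$ with $c \in \Gamma_p$, use Lemma \ref{lemma_earlier_cluster} to get $p \le q$, observe $\ell = \gamma^{-1}(c)$ is a child of $n_p$, then split into the cases $p = q$ (trivial) and $p < q$ (apply Lemma \ref{lemma_internal_descedant} and transitivity). Your extra care in noting that $\kappa_p \ne \emptyset$ so that $n_p$ is actually constructed is a nice touch the paper leaves implicit.
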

\begin{proof}
    Notice that $c \in \Gamma_p$ for some $1 \le p \le m$.
    Then $x \in \vars(c) \subseteq \vars(\Gamma_p)$.
    Note that $x \in \pi(n_q) = X_q$.
    Thus $p \le q$ by Lemma \ref{lemma_earlier_cluster}.

    Let $l = \gamma^{-1}(c)$.
    Notice that $l \in \kappa_p$ (as $c \in \Gamma_p$).
    So $l$ is a child of the node $n_p$.
    \begin{itemize}
        \item If $p = q$, then $l$ is a child of $n_q$, and we are done.
        \item If $p < q$, by Lemma \ref{lemma_internal_descedant}, we know $n_p$ is a descendant of $n_q$, as $x \in \vars(\Gamma_p) \cap \pi(n_q) = \vars(\Gamma_p) \cap X_q \ne \emptyset$.
        Then $l$ is a descendant of $n_q$.
    \end{itemize}
\qed
\end{proof}

Now we can prove Theorem \ref{thm_csp_jt}
\begin{proof}
    Algorithm \ref{alg_csp_jt} returns a tree $T$ with root $n_m$, bijection $\gamma : \Lv T \to \phi$, and labeling function $\pi : \V T \setminus \Lv T \to 2^X$.
    The project-join tree properties are satisfied, by Lemma \ref{lemma_prop1} and Lemma \ref{lemma_prop2}
\qed
\end{proof}


\subsection{Proof of Theorem \ref{thm_td_to_join}}

Let $(T, r, \gamma, \pi)$ be the object returned by Algorithm \ref{alg_td_to_join}.
We first observe that $T$ is indeed a tree with root $r$.
For each node $a \in \V{T}$, let $O(a) \in \V{S}$ denote the node in $\V{S}$ \st{} $a$ was created in the $\func{Process}(O(a), \ell)$ call for some $\ell \subseteq X$.
Throughout, let $s$ denote the value obtained on Line \ref{line_arbitrary_node} of Algorithm \ref{alg_td_to_join}.

We begin by stating three basic properties of Algorithm \ref{alg_td_to_join}.
\begin{lemma}
\label{lem:td-label}
    For each $a \in \V{T} \setminus \Lv{T}$, we have $\pi(a) \subseteq \chi(O(a))$.
\end{lemma}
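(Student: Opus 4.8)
\textbf{Proof plan for Lemma \ref{lem:td-label}.}

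The plan is to prove the statement by an inner structural induction that mirrors the recursive structure of the $\func{Process}$ function. The key observation is that internal nodes of $T$ are created only on Line \ref{line_return_singleton}, in the call $\func{Process}(n, \ell)$ for some $n \in \V{S}$ and $\ell \subseteq X$, and such a node $o$ gets label $\pi(o) = \chi(n) \setminus \ell \subseteq \chi(n)$. So for a node $a$ created directly at $n = O(a)$, we immediately have $\pi(a) = \chi(O(a)) \setminus \ell \subseteq \chi(O(a))$, which is exactly what is claimed.

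First I would make precise what $O(a)$ means: when Line \ref{line_return_singleton} executes inside $\func{Process}(n, \ell)$ and returns the singleton $\{\internal(T, children, \chi(n) \setminus \ell)\}$, the newly constructed internal node $o$ satisfies $O(o) = n$ and $\pi(o) = \chi(n) \setminus \ell$. Since $\func{Process}$ is the only place internal nodes are added to $T$, every internal node $a \in \V{T} \setminus \Lv{T}$ arises this way. Then $\pi(a) = \chi(O(a)) \setminus \ell$ for the corresponding $\ell$, and since set difference only removes elements, $\chi(O(a)) \setminus \ell \subseteq \chi(O(a))$. This gives the result directly; no induction on the tree structure is actually needed, only the observation that internal-node creation is localized to Line \ref{line_return_singleton}.

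I would present this as: let $a \in \V{T} \setminus \Lv{T}$. By inspection of Algorithm \ref{alg_td_to_join}, $a$ was constructed by a call to $\internal(T, children, \chi(n) \setminus \ell)$ on Line \ref{line_return_singleton} during some invocation $\func{Process}(n, \ell)$, and by definition $O(a) = n$ and $\pi(a) = \chi(n) \setminus \ell$. Hence $\pi(a) = \chi(O(a)) \setminus \ell \subseteq \chi(O(a))$.

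The main (and only) subtlety is being careful that the definition of $O(a)$ is well-posed, i.e., that each internal node is created by exactly one $\func{Process}$ call, and that the label passed to $\internal$ is precisely $\chi(n) \setminus \ell$ rather than something that could later be mutated; both follow from reading the pseudocode, since $\internal$ is called once per node with a fixed label and the project-join tree is built incrementally without relabeling. So I expect this lemma to be essentially immediate once the bookkeeping around $O(\cdot)$ is set up, and the real work will come in the subsequent lemmas that track how the $\ell$ argument propagates down the recursion (needed for the width bound of $w+1$).
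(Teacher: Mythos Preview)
Your proposal is correct and takes essentially the same approach as the paper: the paper's proof is simply the one-line observation that the claim follows from Line~\ref{line_return_singleton}, which is exactly the core of your argument (an internal node $a$ created during $\func{Process}(n,\ell)$ has $\pi(a)=\chi(n)\setminus\ell\subseteq\chi(n)=\chi(O(a))$). Your write-up is more verbose and correctly notes that no induction is actually needed, but the substance is identical.
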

\begin{proof}
    This follows from Line \ref{line_return_singleton} of Algorithm \ref{alg_td_to_join}.
\qed
\end{proof}

\begin{lemma}
\label{lem:td-parent}
    For each $a \in \V{T} \setminus \Lv{T}$ where $O(a) \neq s$, let $p$ be the parent of $O(a)$ in $S$.
    Then $\pi(a) \cap \chi(p) = \emptyset$.
\end{lemma}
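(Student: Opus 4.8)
The plan is to trace the execution of Algorithm \ref{alg_td_to_join} and observe that the label $\pi(a)$ of any internal node $a$ is, by construction, a subset of $\chi(O(a))$ with the second argument $\ell$ of the $\func{Process}$ call that created $a$ removed; the work then reduces to identifying this $\ell$ when $O(a) \ne s$.

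First I would establish that $O(a)$ is well-defined and that $\func{Process}$ is invoked exactly once on each node of $S$. The initial invocation is $\func{Process}(s, \emptyset)$ from Line \ref{line_arbitrary_node}, and every subsequent invocation occurs on Line \ref{line_recur} as $\func{Process}(o, \chi(n))$ for a child $o \in \C(n)$ of the node $n$ currently being processed. Since $S$ is a tree, rooting it at $s$ makes the recursion follow the resulting child relation, so each node of $S$ is processed exactly once. Hence for every $a \in \V{T} \setminus \Lv{T}$ there is a unique $O(a) \in \V{S}$ whose processing created $a$.

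Next I would pin down $\pi(a)$. An internal node $a$ is created only on Line \ref{line_return_singleton}, inside an invocation $\func{Process}(O(a), \ell)$ for some $\ell \subseteq X$, and there it is assigned $\pi(a) = \chi(O(a)) \setminus \ell$. Now suppose $O(a) \ne s$. Then the invocation that created $a$ is not the top-level one from Line \ref{line_arbitrary_node}, so it must be a recursive call from Line \ref{line_recur}; that is, $O(a) \in \C(p)$ for some node $p$ processed earlier, and the call was $\func{Process}(O(a), \chi(p))$. Since $S$ is rooted at $s$, this $p$ is precisely the parent of $O(a)$ in $S$. Therefore $\ell = \chi(p)$ and $\pi(a) = \chi(O(a)) \setminus \chi(p)$, whence $\pi(a) \cap \chi(p) = \bigl(\chi(O(a)) \setminus \chi(p)\bigr) \cap \chi(p) = \emptyset$, as claimed.

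I do not anticipate a serious obstacle: the only care needed is the bookkeeping that $\func{Process}$ touches each node of $S$ once and that the second argument of the recursive call equals the label of the parent, which is immediate from Line \ref{line_recur}. This lemma, together with Lemma \ref{lem:td-label}, will feed into the subsequent verification of the running-intersection-type property and of the width bound asserted in Theorem \ref{thm_td_to_join}.
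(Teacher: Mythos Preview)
Your proposal is correct and follows essentially the same argument as the paper: identify that the second argument $\ell$ in the call $\func{Process}(O(a),\ell)$ equals $\chi(p)$ because the recursive call on Line~\ref{line_recur} passes $\chi(n)$ for the parent $n$, and then read off $\pi(a)=\chi(O(a))\setminus\ell$ from Line~\ref{line_return_singleton}. The paper's proof is simply a terser version of yours, omitting the bookkeeping about uniqueness of $O(a)$ that you spell out.
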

\begin{proof}
    Observe that $\ell = \chi(p)$ in the $\func{Process}(O(a), \ell)$ call on Line \ref{line_recur}.
    The result then follows from Line \ref{line_return_singleton} of Algorithm \ref{alg_td_to_join}.
\qed
\end{proof}

\begin{lemma}
\label{lem:td-bounds}
    Let $n \in \V{S}$.
    For every $\ell \subseteq X$ and $i \in \func{Process}(n, \ell)$, we have $\vars(i) \subseteq \ell$.
\end{lemma}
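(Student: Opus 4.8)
The plan is to prove the statement by structural induction on the subtree of $S$ rooted at $n$ (equivalently, on the depth of the recursion), keeping $\ell \subseteq X$ universally quantified in the induction hypothesis. This strengthening over $\ell$ is what lets the recursive call $\func{Process}(o, \chi(n))$ on Line \ref{line_recur} --- whose second argument differs from the $\ell$ supplied to the enclosing call --- fall directly under the induction hypothesis.

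Fix $n$ and $\ell$, and examine $\func{Process}(n, \ell)$. Write $K$ for the set computed on Line \ref{line_recur} (the variable \textit{children} in Algorithm \ref{alg_td_to_join}). Each element $j$ of $K$ is either a fresh leaf $\leaf(T, c)$ for some clause $c$ with $\vars(c) \subseteq \chi(n)$, by the guard on Line \ref{line_clauses}, so that $\vars(\leaf(T, c)) = \vars(c) \subseteq \chi(n)$; or it is an element of $\func{Process}(o, \chi(n))$ for some $o \in \C(n)$, so that $\vars(j) \subseteq \chi(n)$ by the induction hypothesis applied to the subtree rooted at $o$ with parameter $\chi(n)$. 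Either way, $\bigcup_{j \in K} \vars(j) \subseteq \chi(n)$. The algorithm then returns in one of two ways. If $K = \emptyset$ or $\chi(n) \subseteq \ell$, it returns $K$; in the first case the conclusion is vacuous, and in the second every returned $j$ satisfies $\vars(j) \subseteq \chi(n) \subseteq \ell$. Otherwise it returns $\{o'\}$ where $o' = \internal(T, K, \chi(n) \setminus \ell)$, so $\pi(o') = \chi(n) \setminus \ell$ and, by the definition of $\vars$ on internal nodes, $\vars(o') = \left(\bigcup_{j \in K} \vars(j)\right) \setminus (\chi(n) \setminus \ell)$.

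The remaining point is that $\left(\bigcup_{j \in K} \vars(j)\right) \setminus (\chi(n) \setminus \ell) \subseteq \ell$. This is an instance of the elementary fact that $A \setminus (B \setminus C) \subseteq C$ whenever $A \subseteq B$: if $x \in A$ and $x \notin B \setminus C$, then since $x \in A \subseteq B$ we must have $x \in C$. Taking $A = \bigcup_{j \in K} \vars(j)$, $B = \chi(n)$, and $C = \ell$, together with the containment $\bigcup_{j \in K} \vars(j) \subseteq \chi(n)$ established above, yields exactly the claimed bound and closes the induction. I do not anticipate a genuine obstacle here; the only subtleties are remembering to universally quantify $\ell$ so that the recursive call with second argument $\chi(n)$ is covered, and noting that any leaf created during the call to $\func{Process}(n, \cdot)$ has its variable set contained in $\chi(n)$.
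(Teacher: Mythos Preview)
Your proof is correct and follows essentially the same approach as the paper: structural induction on $S$, first establishing $\vars(j)\subseteq\chi(n)$ for every element $j$ of \textit{children}, then handling the two return branches exactly as you do. The only difference is cosmetic---you spell out the set-theoretic fact $A\setminus(B\setminus C)\subseteq C$ for $A\subseteq B$ and the need to universally quantify $\ell$, which the paper leaves implicit.
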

\begin{proof}
    We proceed by induction on the tree structure of $S$.

    Let $A$ denote the set $children$ after Line \ref{line_recur} occurs and let $a \in A$.
    We first prove that $\vars(a) \subseteq \chi(n)$.
    First, assume that $a$ is a leaf node corresponding to some $c \in clauses$.
    In this case, $\vars(a) = \vars(c) \subseteq \chi(n)$ by Line \ref{line_clauses}.
    Otherwise $a \in \func{Process}(o, \chi(n))$ for some $o \in C(n)$.
    In this case, notice that $n$ is an internal node, so by the inductive hypothesis, $\vars(a) \subseteq \chi(n)$.

    Now, if $A = \emptyset$, then $\func{Process}(n, \ell)$ returns $\emptyset$, so the lemma is vacuously true.
    If $\chi(n) \subseteq \ell$, then $A$ is returned by $\func{Process}(n, \ell)$.
    So for every $i \in A$, we have $\vars(i) \subseteq \chi(n) \subseteq \ell$.

    Otherwise, $A \neq \emptyset$ and $\chi(n) \not\subseteq \ell$.
    In this case, $\func{Process}(n, \ell)$ returns a single node $i$ with $\vars(i) = \cup_{a \in A} \vars(a) \setminus (\chi(n) \setminus \ell) \subseteq \chi(n) \setminus (\chi(n) \setminus \ell) \subseteq \ell$.
\qed
\end{proof}







Given these three properties, it is straightforward to prove that $(T, r, \gamma, \pi)$ satisfies all conditions to be a project-join tree of $\phi$.
We prove each condition in a separate lemma here.

\begin{lemma}
    $\gamma$ is a bijection.
\end{lemma}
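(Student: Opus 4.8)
The plan is to check separately the three conditions making $\gamma : \Lv{T} \to \phi$ a bijection: that it is total, that it is injective, and that it is surjective. Throughout I would lean on one structural observation, provable by a straightforward induction on $S$ (rooted at the node $s$ fixed on Line~\ref{line_arbitrary_node}) in the same spirit as the induction behind Lemma~\ref{lem:td-bounds}: the top-level call $\func{Process}(s, \emptyset)$ eventually invokes $\func{Process}(n, \cdot)$ for \emph{every} $n \in \V{S}$, since Line~\ref{line_recur} recurses into every child of the node being processed.

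For totality, I would observe that leaves of $T$ arise only from the $\leaf(T, c)$ calls on Line~\ref{line_recur}, where $c \in clauses \subseteq \phi$, so every leaf does receive a $\phi$-value under $\gamma$; conversely, each node produced by an $\internal$ call on Line~\ref{line_return_singleton} has a nonempty child set (the enclosing \textbf{If} rules out $children = \emptyset$) and hence is not a leaf of $T$. The only delicate point is that the \emph{root} could a priori be an internal node of degree one, or even a bare $\leaf$ node when $S$ consists of a single bag; I would dispose of this using the auxiliary fact (the same one needed to justify ``only element of $\func{Process}(s,\emptyset)$'') that $\func{Process}(s,\emptyset)$ returns a singleton whose structure is as intended, checking the degenerate single-empty-clause case by direct inspection.

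For injectivity, the mechanism is the running set $found$. A clause $c$ is turned into a leaf at a node $n$ only when $c \notin found$ at that moment (Line~\ref{line_clauses}), and $c$ is inserted into $found$ on the very next line (Line~\ref{line_found}), before any further recursive call. Hence $c$ can be used for at most one $\leaf$ call, so distinct leaves of $T$ carry distinct clauses, i.e., $\gamma$ is injective. For surjectivity, I would fix an arbitrary $c \in \phi$; since $\vars(c)$ induces a clique in $\gaifman(\phi)$, the standard clique-containment property of tree decompositions yields some $n \in \V{S}$ with $\vars(c) \subseteq \chi(n)$. By the structural observation, $\func{Process}(n, \cdot)$ is eventually called, and at that point either $c \in found$ already — so $c$ was made into a leaf at an earlier-processed node — or $c \notin found$, in which case $c \in clauses$ on Line~\ref{line_clauses} and a leaf with $\gamma$-image $c$ is created. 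Either way $c$ lies in the range of $\gamma$. I expect the only real obstacle to be the root/degenerate-bag bookkeeping in the totality step; everything else is a direct unwinding of Algorithm~\ref{alg_td_to_join} against the tree-decomposition axioms.
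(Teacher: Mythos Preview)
Your proposal is correct and follows essentially the same approach as the paper: the paper also uses the $found$ set for injectivity and the clique-containment property of tree decompositions for surjectivity. Your version is more careful in that you explicitly justify totality and spell out why the existence of a bag containing $\vars(c)$ actually yields a leaf (via the observation that every node of $S$ is eventually processed), whereas the paper's proof takes both of these for granted and simply concludes ``Thus $\gamma$ is a surjection as well.''
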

\begin{proof}
    Note that $\gamma$ is an injection since $found$ on Line \ref{line_found} of Algorithm \ref{alg_td_to_join} ensures that we generate at most one leaf node for each clause.
    To show that $\gamma$ is a surjection, consider $c \in \phi$.
    Then $\vars(c)$ forms a clique in the Gaifman graph of $\phi$.
    It follows (since the treewidth of a complete graph on $k$ vertices is $k-1$) that $\vars(c) \subseteq \chi(n)$ for some $n \in \V{S}$.
    Thus $\gamma$ is a surjection as well.
\qed
\end{proof}
\begin{lemma}
    $P = \{\pi(a) : a \in \V{T} \setminus \Lv{T} \}$ is a partition of $X$.
\end{lemma}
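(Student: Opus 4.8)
The plan is to show that $P = \{\pi(a) : a \in \V{T} \setminus \Lv{T}\}$ covers $X$ and that its members are pairwise disjoint. The two halves are structurally independent, so I would handle them separately.

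For the covering part, I would take an arbitrary variable $x \in X$ and exhibit an internal node whose $\pi$-label contains $x$. Since $x$ appears in at least one clause of $\phi$, the vertex $x$ exists in $\gaifman(\phi)$, so there is some node $m \in \V{S}$ with $x \in \chi(m)$; fix one such $m$. Consider the path in $S$ from $m$ up to the root $s$. I would argue that at some point along the processing of this path, $x$ gets projected out, i.e.\ it lands in $\pi(a)$ for some internal node $a$ with $O(a)$ on this path. Concretely, start at $m$: when $\func{Process}(m, \ell)$ runs, its argument $\ell = \chi(p)$ for $p$ the parent of $m$ (or $\ell = \emptyset$ if $m = s$). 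If $x \notin \ell$ and the relevant $A$ is nonempty, then the single returned internal node $a$ has $x \in \chi(m)\setminus\ell \subseteq \pi(a)$, and we are done. If $x \in \ell = \chi(p)$, then $x$ also lies in the label of $p$, so we may repeat the argument one step up the tree at $p$. This loop must terminate by the time we reach $s$, because at $s$ the argument $\ell = \emptyset$, so $x \notin \ell$; the only remaining worry is that the relevant $A$ is empty at every node along the way, but that cannot happen for all of them since $m$ contributes at least the leaf for a clause containing $x$ (via Line~\ref{line_clauses}, provided that clause was not already $found$; if it was $found$, it was generated at an ancestor node, and we can follow \emph{that} occurrence instead). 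I would need to set this up carefully so the right ``first occurrence'' of each clause is tracked.

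For the disjointness part, suppose $a \ne b$ are internal nodes with $x \in \pi(a) \cap \pi(b)$. By Lemma~\ref{lem:td-label}, $x \in \chi(O(a)) \cap \chi(O(b))$. The running-intersection property (Property~\ref{prop_running_intersection} of the tree decomposition) then forces $x \in \chi(o)$ for every node $o$ on the path in $S$ from $O(a)$ to $O(b)$, in particular for the node adjacent to whichever of $O(a), O(b)$ is not an ancestor of the other. Without loss of generality assume $O(a)$ is not a strict ancestor of $O(b)$; then either $O(a) = O(b)$, or the parent $p$ of $O(a)$ lies on this path so $x \in \chi(p)$, contradicting Lemma~\ref{lem:td-parent} which gives $\pi(a) \cap \chi(p) = \emptyset$. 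In the remaining case $O(a) = O(b) = n$: then $a$ and $b$ were both created inside $\func{Process}(n, \cdot)$ calls, but that function creates at most one internal node per invocation, and different invocations of $\func{Process}(n,\cdot)$ cannot both occur (each node $n \neq s$ of $S$ is processed exactly once, from its parent, and $s$ is processed exactly once from the top-level call). So $a = b$, a contradiction. Hence the labels are pairwise disjoint.

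The main obstacle I anticipate is the covering argument, specifically making the ``follow the first occurrence of a clause containing $x$'' bookkeeping precise: because $found$ is mutated across recursive calls, the leaf for a given clause is generated at exactly one node of $S$, and I must argue that this node, together with the set of nodes whose label contains $x$, forms a connected subtree of $S$ reaching far enough toward the root that $x$ is eventually projected. I expect this to reduce cleanly to the running-intersection property plus an induction on $S$, mirroring the structure of Lemma~\ref{lem:td-bounds}, but it is the step that requires the most care. Disjointness, by contrast, follows almost immediately from Lemmas~\ref{lem:td-label} and~\ref{lem:td-parent} and running intersection.

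\begin{proof}
    \textbf{Covering.} Let $x \in X$. Since $x$ appears in some clause of $\phi$, the vertex $x$ belongs to $\gaifman(\phi)$, so there is a node $m \in \V{S}$ with $x \in \chi(m)$. Let $c \in \phi$ be a clause containing $x$, and let $n_0 \in \V{S}$ be the unique node at which the leaf $\gamma^{-1}(c)$ was created; by Line~\ref{line_clauses} we have $\vars(c) \subseteq \chi(n_0)$, so $x \in \chi(n_0)$. Let $n_0, n_1, \ldots, n_k = s$ be the path in $S$ from $n_0$ to the root $s$. I claim that for some $0 \le j \le k$, the internal node $a$ created by the call $\func{Process}(n_j, \ell_j)$ satisfies $x \in \pi(a)$, where $\ell_j = \chi(n_{j+1})$ for $j < k$ and $\ell_k = \emptyset$. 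Consider the smallest $j$ with $x \notin \ell_j$; such a $j$ exists since $\ell_k = \emptyset$. For each $i < j$ we have $x \in \ell_i = \chi(n_{i+1})$, and together with $x \in \chi(n_0)$ and running intersection, $x \in \chi(n_i)$ for all $0 \le i \le j$. Now consider the call $\func{Process}(n_j, \ell_j)$: its set $children$ (call it $A$) is nonempty, because it contains the result of $\func{Process}(n_{j-1}, \chi(n_j))$ when $j > 0$, which is nonempty since $x \in \chi(n_{j-1})$ guarantees by induction that a node is produced; and when $j = 0$, $A$ contains the leaf $\gamma^{-1}(c)$. Moreover $\chi(n_j) \not\subseteq \ell_j$ since $x \in \chi(n_j) \setminus \ell_j$. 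Hence Line~\ref{line_return_singleton} fires, producing an internal node $a$ with $\pi(a) = \chi(n_j) \setminus \ell_j \ni x$. Thus $x \in \bigcup P$, and since $P \subseteq 2^X$, we conclude $\bigcup P = X$.

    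\textbf{Disjointness.} Suppose $a, b \in \V{T} \setminus \Lv{T}$ and $x \in \pi(a) \cap \pi(b)$. By Lemma~\ref{lem:td-label}, $x \in \chi(O(a)) \cap \chi(O(b))$. If $O(a) = O(b) = n$, then $a$ and $b$ were both created during processing of $n$; since each node of $S$ is processed exactly once and each such call creates at most one internal node (Line~\ref{line_return_singleton}), we get $a = b$. Otherwise $O(a) \ne O(b)$. Without loss of generality $O(a)$ is not a strict ancestor of $O(b)$ in $S$, so the parent $p$ of $O(a)$ lies on the path in $S$ from $O(a)$ to $O(b)$. By running intersection, $x \in \chi(p)$. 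But then $x \in \pi(a) \cap \chi(p)$, contradicting Lemma~\ref{lem:td-parent}. Hence $a = b$, so the members of $P$ are pairwise disjoint.

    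Therefore $P$ is a partition of $X$.
\qed
\end{proof}
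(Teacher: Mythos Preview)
Your proof is correct. The disjointness half is essentially the paper's argument, just with the $O(a)=O(b)$ case made explicit (the paper leaves implicit that distinct internal nodes must have distinct $O$-images because each $\func{Process}(n,\cdot)$ is called once and creates at most one internal node).

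For covering you take a genuinely different route. The paper works entirely in $T$: it invokes Lemma~\ref{lem:td-bounds} to get $\vars(r)=\emptyset$, then observes that since $x\in\vars(\gamma^{-1}(c))$ for some leaf and $x\notin\vars(r)$, the variable $x$ must be projected out at some internal node on the $T$-path from that leaf to $r$. You instead work in $S$, walking up from $n_0=O(\gamma^{-1}(c))$ toward $s$ and locating the first node $n_j$ where $x$ drops out of the incoming $\ell$-argument, then verifying directly that Line~\ref{line_return_singleton} fires there. Your argument is self-contained but effectively re-derives what Lemma~\ref{lem:td-bounds} already packages; the paper's is shorter precisely because it cashes in that lemma. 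One phrasing nit: the clause ``since $x\in\chi(n_{j-1})$ guarantees by induction that a node is produced'' is misleading---the nonemptiness of $\func{Process}(n_{j-1},\cdot)$ comes from the leaf created at $n_0$ propagating up the chain $n_0,n_1,\ldots$, not from the presence of $x$ in $\chi(n_{j-1})$. The underlying induction (base case: leaf at $n_0$; step: the return of $\func{Process}(n_{i-1},\ell_{i-1})$ lands in $children$ at $n_i$) is sound, so this is cosmetic.
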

\begin{proof}
    First, let $x \in X$.
    Then $x \in \vars(c)$ for some $c \in \phi$.
    Since $\gamma$ is a bijection, $x \in \vars(p)$ for some $p = \gamma^{-1}(c) \in \Lv{T}$.
    However, by Lemma \ref{lem:td-bounds}, we know $x \notin \vars(r) = \emptyset$.
    Thus $x$ must have been projected out at some node $q \in \V{T}$ between $p$ and $r$.
    It follows that $x \in \pi(q) \subseteq P$.

    On the other hand, assume for the sake of a contradiction that there are distinct $a, b \in \V{T}$ \st{} $x \in \pi(a) \cap \pi(b)$.
    By Lemma \ref{lem:td-label}, $x \in \chi(O(a)) \cap \chi(O(b))$.
    Since $S$ is a tree, there is some node $p \in \V{S}$ on the path between $O(a)$ and $O(b)$ \st{} $p$ is the parent of either $O(a)$ or $O(b)$.
    By Property \ref{prop_running_intersection} of tree decompositions, $x \in \chi(p)$.
    However, this contradicts Lemma \ref{lem:td-parent}.
\qed
\end{proof}

\begin{lemma}
    For each internal node $a \in \V{T} \setminus \Lv{T}$, variable $x \in \pi(a)$, and clause $c \in \phi$ \st{} $x$ appears in $c$, the leaf node $\gamma^{-1}(c)$ is a descendant of $a$ in $T$.
\end{lemma}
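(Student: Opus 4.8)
The plan is to reduce the statement to a single structural claim about the recursion in Algorithm \ref{alg_td_to_join} and then combine it with Lemmas \ref{lem:td-label} and \ref{lem:td-parent} together with the running-intersection property (Property \ref{prop_running_intersection}). First I would record the elementary bookkeeping facts about Algorithm \ref{alg_td_to_join}: the function $\func{Process}$ is invoked exactly once on each node of $S$ (the recursion starts at $s$ and descends to every node of $S$), each $\func{Process}$ call creates at most one internal node of $T$ (on Line \ref{line_return_singleton}), and when $\func{Process}$ is invoked on a non-root node $m$ of $S$ it is called with second argument $\chi(q)$, where $q$ is the parent of $m$ in $S$ (this is exactly how Line \ref{line_recur} recurses). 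In particular, if $a$ is an internal node of $T$ with $O(a) \ne s$, the call that created $a$ was $\func{Process}(O(a), \chi(q))$ where $q$ is the parent of $O(a)$ in $S$, and that call returns exactly $\{a\}$.

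Next I would prove the following sub-lemma by structural induction on $S$: for every node $m \in \V{S}$ and every $\ell \subseteq X$, every node of $T$ created during the call $\func{Process}(m, \ell)$ (counting all nested recursive subcalls) either belongs to $\func{Process}(m, \ell)$ or is a descendant in $T$ of some node in $\func{Process}(m, \ell)$. The induction is routine: the new leaves produced at $m$ lie in $children$; the nodes produced in the recursive calls $\func{Process}(o, \chi(m))$ for $o \in \C(m)$ are, by the inductive hypothesis, members of or descendants of the nodes those calls return, all of which lie in $children$; and the set returned by $\func{Process}(m, \ell)$ is either $children$ itself (when $children = \emptyset$ or $\chi(m) \subseteq \ell$) or the singleton $\{\internal(T, children, \chi(m) \setminus \ell)\}$, whose unique element has every node of $children$ as a $T$-child. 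One checks the desired property survives each of these three exit branches.

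Then I would assemble the argument. Fix an internal node $a$ with $x \in \pi(a)$, a clause $c \in \phi$ containing $x$, and set $p = \gamma^{-1}(c)$, $n_a = O(a)$, $n_p = O(p)$. By Lemma \ref{lem:td-label}, $x \in \chi(n_a)$; since $p$ was created in $\func{Process}(n_p, \cdot)$ with $c$ among the clauses selected on Line \ref{line_clauses}, $\vars(c) \subseteq \chi(n_p)$, so $x \in \chi(n_p)$. I then claim $n_p$ lies in the subtree of $S$ rooted at $n_a$. If $n_a = s$ this is immediate. Otherwise, if $n_p$ were not in that subtree, the $S$-path from $n_a$ to $n_p$ would pass through the parent $q$ of $n_a$, so Property \ref{prop_running_intersection} would give $x \in \chi(n_a) \cap \chi(n_p) \subseteq \chi(q)$; but $\func{Process}(n_a, \cdot)$ was called with second argument $\chi(q)$, so Lemma \ref{lem:td-parent} gives $\pi(a) \cap \chi(q) = \emptyset$, contradicting $x \in \pi(a) \cap \chi(q)$. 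Hence $n_p$ is in the subtree rooted at $n_a$, so the call $\func{Process}(n_p, \cdot)$ — and therefore the creation of $p$ — occurs nested inside the call $\func{Process}(n_a, \cdot)$ that created $a$. The sub-lemma then forces $p$ to equal or be a descendant of some node in $\func{Process}(n_a, \cdot) = \{a\}$; since $p$ is a leaf while $a$ is internal, $p$ must be a proper descendant of $a$ in $T$, as required.

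The main obstacle will be stating and proving the sub-lemma in a form robust to the fact that a $\func{Process}$ call may return several nodes and may pass freshly created leaves upward without wrapping them in a new internal node (the $\chi(m) \subseteq \ell$ branch). This is why the inductive invariant must be phrased as ``member of, or descendant of a member of, the returned set'' rather than anything tighter, and why the three exit branches each need separate verification. Everything else — the bookkeeping facts and the final contradiction argument — is straightforward once that claim is available.
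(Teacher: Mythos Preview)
Your proof is correct and follows essentially the same route as the paper: both reduce to showing that $O(\gamma^{-1}(c))$ must lie in the $S$-subtree rooted at $O(a)$, via Lemmas~\ref{lem:td-label} and~\ref{lem:td-parent} together with the running-intersection property. Your sub-lemma makes explicit the step the paper simply asserts (``Then $O(\gamma^{-1}(c))$ is not a descendant of $O(a)$ in $S$''), so your version is more carefully justified but not structurally different.
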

\begin{proof}
    If $O(a) = s$, then $a$ is the root of $T$, so all leaf nodes are descendants.
    Otherwise, assume for the sake of contradiction that $\gamma^{-1}(c)$ is not a descendant of $a$ in $T$.
    Then $O(\gamma^{-1}(c))$ is not a descendant of $O(a)$ in $S$.
    This means that the parent $p \in \V{S}$ of $O(a)$ is on the path between $O(a)$ and $O(\gamma^{-1}(c))$.
    By Lemma \ref{lem:td-label}, we must have $x \in \chi(O(a)) \cap \chi(O(\gamma^{-1}(c)))$.
    By Property \ref{prop_running_intersection} of tree decompositions, $x \in \chi(p)$.
    But this contradicts Lemma \ref{lem:td-parent}.
\qed
\end{proof}

It follows that $(T, r, \gamma, \pi)$ is a project-join tree of $\phi$.


\subsection{Proof of Theorem \ref{thm_join_to_td}}

\begin{proof}
    Let $X = \vars(\phi)$.
    Define $\chi: \V{T} \to 2^X$ by, for all $n \in \V{T}$,
    $\chi(n) \equiv\vars(n)$ if $n \in \Lv{T}$ and $\chi(n) \equiv \vars(n) \cup \pi(n)$ otherwise.
    Then $(T, \chi)$ is a tree decomposition of the Gaifman graph of $\phi$.
    Moreover, the width of $(T, \chi)$ is $\func{tw}(T, \chi) = \max_{n \in \V{T}} \size{\chi (n)}  - 1 = \max_{n \in \V T} \func{size}(n) - 1 = w - 1$.
\qed
\end{proof}



\end{document}